
\documentclass[letterpaper, 10 pt, conference]{ieeeconf}

\usepackage[T1]{fontenc}
\usepackage[utf8]{inputenc}
\usepackage{graphicx}
\usepackage{amsmath,systeme,amssymb,amsfonts}
\usepackage[version=4]{mhchem}
\usepackage{siunitx}
\usepackage{longtable,tabularx}
\usepackage{comment}
\usepackage{float}
\setlength\LTleft{0pt}

\usepackage[utf8]{inputenc}
\usepackage{tikz, pgfplots}
\usepackage{textcomp}
\usepackage{tcolorbox} 
\usepackage{graphicx}
\usepackage{amsmath}
\usepackage[version=4]{mhchem}
\usepackage{siunitx}
\usepackage{longtable,tabularx}
\setlength\LTleft{0pt} 
\usepackage{tikz}
\usetikzlibrary{positioning, shapes, arrows}
\usepackage{thmtools}
\usepackage{textcomp}
\usepackage{soul}
\usepackage{float,amsfonts,amsthm,color}
\usepackage{amssymb}
\usepackage{graphics} 
\usepackage{booktabs} 
\usepackage{caption, subcaption}
\usepackage{breqn}
\usepackage{array, mathtools}
\usepackage{pgfplots}
\usepackage{siunitx}
\usepackage{algorithm}
\usepackage[noend]{algpseudocode}
\usepackage{tabulary}
\usepackage{stmaryrd} 
\usepackage{varwidth}
\usepackage{comment}
\usetikzlibrary{positioning}

\newtheorem{theorem}{Theorem}
\newtheorem{definition}{Definition}

\newtheorem{proposition}{Proposition}
\newtheorem{remark}{Remark}

\newtheorem{example}{Example}

\newcommand{\R}{\mathbb{R}}

\DeclareMathOperator*{\argmin}{arg\, min} 

\DeclareMathOperator*{\argmax}{arg\, max} 

\IEEEoverridecommandlockouts                              
\overrideIEEEmargins                                     

\title{\LARGE \bf
Passivity, No-Regret, and Convergent Learning in Contractive Games
}

\author{Hassan Abdelraouf$^{1}$, Georgios Piliouras$^{2}$, and Jeff S. Shamma$^{3}$
\thanks{$^{1}$ Department of Aerospace Engineering, University of Illinois at Urbana-Champaign
        {\tt\small hassana4@illinois.edu}}%
\thanks{$^{2}$ Google DeepMind {\tt\small gpil@deepmind.com}
}
\thanks{$^{3}$Department of Industrial and Enterprise Systems Engineering, University of Illinois at Urbana-Champaign
        {\tt\small jshamma@illinois.edu}}%
}

\begin{document}

\maketitle
\thispagestyle{empty}
\pagestyle{empty}

\begin{abstract}
We investigate the interplay between passivity, no-regret, and convergence in contractive games for various learning dynamic models and their higher-order variants. Our setting is continuous time. Building on prior work for replicator dynamics, we show that if learning dynamics satisfy a passivity condition between the payoff vector and the difference between its evolving strategy and any fixed strategy, then it achieves finite regret. We then establish that the passivity condition holds for various learning dynamics and their higher-order variants. Consequentially, the higher-order variants can achieve convergence to Nash equilibrium in cases where their standard order counterparts cannot, while maintaining a finite regret property.  We provide numerical examples to illustrate the lack of finite regret of different evolutionary dynamic models that violate the passivity property. We also examine the fragility of the finite regret property in the case of perturbed learning dynamics. Continuing with passivity, we establish another connection between finite regret and passivity, but with the related equilibrium-independent passivity property. Finally, we present a passivity-based classification of dynamic models according to the various passivity notions they satisfy, namely, incremental passivity, $\delta$-passivity, and equilibrium-independent passivity. This passivity-based classification provides a framework to analyze the convergence of learning dynamic models in contractive games.

\end{abstract}

\section{INTRODUCTION}
No-regret online learning algorithms have become powerful tools in designing adaptive and efficient decision-making strategies within dynamic, uncertain, and competitive environments. These algorithms enable agents to make sequential decisions while minimizing their regret, defined as the difference between the cumulative reward of the algorithm and that of the best fixed action in hindsight \cite{cesa2006prediction}. Among the most well-known no-regret algorithms are \emph{Follow-The-Regularized-Leader} (FTRL) \cite{shalev2007primal}, which includes \emph{Multiplicative-Weights-Update} (MWU)\cite{freund1999adaptive} as a special case, and \emph{Online-Mirror-Descent} (OMD) \cite{hazan2016introduction}. By appropriately selecting a decreasing step size, these algorithms can achieve  $O(\sqrt{T})$ regret.

The intersection of online learning and game theory, often referred to as learning in games,  explores how rational agents learn through repeated interactions in strategic settings, where each agent seeks to maximize its own utility while considering the strategies of others. In this framework, no-regret algorithms allow agents to learn and adapt from past interactions and, for some games, converge to equilibrium strategies without centralized coordination\cite{shoham2008multiagent}. Indeed, fundamental connections have been established between no-regret learning and game-theoretic solution concepts \cite{freund1999adaptive,hart2000simple,foster1997calibrated}.

Passivity is a fundamental input-output property of dynamical systems that abstracts the principles of energy conservation and dissipation in mechanical and electrical systems \cite{willems1972dissipative}. In the online learning context, learning algorithms can be viewed as an input-output operator where the input is the stream of payoffs and the output is the stream of strategies. This perspective enables the application of passivity theory to both analyze and design learning algorithms. For instance, \cite{fox2013population} introduced the notion of $\delta$-passivity to study the convergence of learning dynamics to Nash equilibria in contractive games, showing that the implementation of a learning dynamic model in a game can be viewed as a feedback interconnection between the learning operator (with payoff as input and strategy as output) and the game operator (with strategy as input and payoff as output). Moreover, as demonstrated in \cite{mabrok2016passivity}, if a learning dynamic model fails to satisfy a suitable passivity property, it is possible to construct a higher-order game that results in instability. Recent works \cite{gao2020passivity,gao2023second} have further employed equilibrium-independent passivity to analyze convergence in zero-sum games and to design higher-order variants of learning dynamic models that converge in games where their standard counterparts do not. In this work, we adopt an alternative notion of passivity—specifically, passivity from the payoff vector to the difference between the output strategy and any fixed strategy—which is directly linked to the no-regret property of learning dynamics \cite{cheung2021online}.

The main contributions of this paper are summarized as: 

\begin{itemize}
    \item We show that the strategic higher-order variants of learning dynamic models that have finite regret not only improve convergence in zero-sum games, where their standard counterparts fail to converge, but also preserve the finite regret property (Section \ref{sec: Passivity and No-regret}). 
    \item Numerical examples are provided, illustrating that several evolutionary dynamic models used in population games lack the finite regret property (Section \ref{sec: Passivity and No-regret}).
    \item The finite regret property of a learning dynamic model is shown to be fragile under payoff perturbations (Section \ref{sec: Finite regret fragility}).
    \item A connection is established between finite regret and equilibrium-independent passivity (Section \ref{sec: finite regret and EI-passivity}).
    \item Incremental passivity is shown to be a generalized notion of passivity that implies both equilibrium-independent passivity and $\delta$-passivity (Section \ref{sec: Passivity-based classification}).
    \item We utilize these results, along with existing $\delta$-passivity results for various evolutionary dynamic models \cite{fox2013population,park2018passivity}, to construct a passivity-based classification of learning dynamic models according to the passivity notions they satisfy: incremental passivity, $\delta$-passivity, and equilibrium-independent passivity (Section \ref{sec: Passivity-based classification}).
    \item We demonstrate that learning dynamic models that have finite regret converge globally to the unique Nash equilibrium in strictly contractive games. Moreover, our passivity-based classification provides a comprehensive framework for analyzing the convergence of learning dynamic models in contractive games (Section \ref{sec: Convergence in contracive games}). 
\end{itemize}

\section{NOTATION}
  We denote by $\R_+$ the set of nonnegative real numbers. For a given vector $x \in \R^n$, $x_i$ denotes its $i$-th entry. $[x]_+$ denotes the nonnegative part of $x$, such that its $i$-th component is given by $[x_i]_+ = \max(x_i,0)$. $\text{diag}(x)$ denotes $n \times n$ diagonal matrix whose $i$-th diagonal entry is $x_i$ for all $i=1,\dots,n$. $I_n \in \R^{n\times n} $ is the identity matrix. $\mathbf{1}_n \in \R^n$ is the vector of all ones and $\mathbf{0}_n$ is the vector of all zeros. For two vectors $x,y \in \R^n$,  the inner product is defined as $\langle x,y\rangle = x^T y$ and the Euclidean norm is denoted by $\|x\|_2$. 

 The probability simplex in $\R^n$ is denoted by $\Delta_n$ and is defined as 
\(
    \{s \in \R^n : s_i\geq 0 \; \forall \;  i=1,\dots,n \text{ and } \mathbf{1}^T s =1\}. 
    \)
    $\text{Int}(\Delta_n)$ denotes the interior of the probability simplex. $e_i \in \Delta_n$ denotes the $i$-th vertex of the simplex $\Delta_n$, i.e., the vector whose $i$-th component equals $1$ and all other components equal $0$. The tangent space of $\Delta_n$ is denoted by $T\Delta_n$ where $T\Delta_n = \{y \in \R^n: \mathbf{1}_n^T y = 0\}$. The normal cone to $\Delta_n$ at a point $x\in \Delta_n$ is given by
    \[N\Delta_n(x)= \{y \in \R^n :  y^T (x'-x) \leq 0 \; \forall \; x' \in \Delta_n\}.\]
    For a closed convex set $C \subset \R^n$, $\Pi_C(x)$ denotes the projection of $x \in \R^n $ onto the set $C$ and is given by 
    \[
    \Pi_C(x) = \argmin_{s \in C} \|x-s\|_2.
    \]
    
    $\sigma: \R^n \to \text{Int}(\Delta_n)$ denotes the softmax function such that $\sigma(x) = \exp(x)/\sum_{i=1}^n\exp(x)_i$. The space of square integrable functions over finite intervals is denoted by $\mathbb{L}_e$. i.e., 
    \[\mathbb{L}_e = \left\{ f: \mathbb{R}_+ \to \mathbb{R}^n: \int_0^T f(t)^T f(t) dt < \infty \;\; \forall \;  T \in \mathbb{R}_+ \right\}.\] 
For $f,g \in \mathbb{L}_e$, the truncated inner produced is defined as $\langle f,g\rangle_T  =\int_0^T f(t)^Tg(t) dt$. 
\section{Background}
\subsection{Online Learning and Regret}
\subsubsection{Continuous-time learning dynamics}
In this paper, we consider continuous-time learning dynamics where the payoff vector $p(\cdot): \R_+ \to \R^n$ is assumed to be a continuously differentiable function of time.  In continuous-time  (FTRL) dynamics, the process starts at time $t=0$  and for any $t>0$, the learner computes the cumulative payoff as  $z(t)=z(0)+\int_0^t p(\tau) d\tau$.  The mixed strategy $x(t)\in \Delta_n$ is then computed as $x(t)= C(z(t))$ where 
\begin{equation*}
    C(z) = \argmax_{y\in \Delta_n} y^T z - h(y), 
\end{equation*}
and $h(y)$ is a differentiable strongly convex function. As described in \cite{mertikopoulos2016learning} and \cite{mertikopoulos2017convergence}, this process is governed  by 
 \begin{equation}
     \dot{z} =p,  \; \; \; \; x= C(z),  
     \label{eq: FTRL dynamics}
 \end{equation}
which is known as the continuous-time (FTRL) dynamics. For instance, when $h(y)= \sum_i^n y_i \log(y_i)$, the strategy $x$ becomes $x= \sigma(z)$, and the learning dynamics takes the form
 \begin{equation*}
     \dot{z} = p, \; \; \; x = \frac{\exp(z)}{\sum_{i=1}^n \exp(z)_i}.
 \end{equation*}
Differentiating $x$ w.r.t. time and substituting   yields 
\begin{equation}
    \dot{x}_i  = x_i(p_i-x^Tp),
    \label{eq: RD equation}
\end{equation}
which describes the replicator dynamics (RD) \cite{mertikopoulos2010emergence}, a fundamental model in evolutionary game theory whose long-term rational behavior  was extensively studied in the game-theoretic context in \cite{hofbauer2009time}. 

The continuous-time version of the \emph{Projected-Gradient-Ascent} (PGA) \cite{shalev2012online}, a special case of (OMD), is given by 
\begin{equation}
    \dot{x} = \Pi_{T\Delta_n(x)}(p), 
    \label{eq:DPD}
\end{equation}
where $\Pi_{T\Delta_n(x)}(p)$ denotes the projection of the payoff vector $p(t)$ onto the tangent cone of $\Delta_n$ at $x(t)$. This formulation ensures that $x(t)$ remains in $\Delta_n$ because the velocity $\dot{x}(t)$ is restricted to the feasible directions defined by the tangent cone $T\Delta_n(x(t))$. In the context of evolutionary game theory, this dynamic model known as Direct Projection (DP) dynamics \cite{lahkar2008projection}.

\subsubsection{Regret} In the continuous-time framework, the regret associated with a learning dynamic model is defined as
\begin{equation*}
R_T(\bar{x}) = \int_0^T p(t)^T(\bar{x}-x(t)) dt 
\end{equation*}
which is the difference between the cumulative reward obtained by a fixed strategy $\bar{x} \in \Delta_n$ and that achieved by the dynamics $x(t)$ over the interval $[0,T]$. A learning dynamic model is said to have \emph{no-regret} if 
\begin{equation*}
    \sup_{\bar{x}\in \Delta_n} R_T(\bar{x}) \leq o(t),
\end{equation*}
which means that the time average regret $R_T(\bar{x})/T$ vanishes as $T \to \infty$ for every $\bar{x} \in \Delta_n$. Moreover, the  model is said to have  \emph{finite regret} if, for any payoff trajectory $p(\cdot)$, the regret  at any $T>0$ is bounded from above by a constant for all $\bar{x} \in \Delta_n$. 
\subsection{Passivity}
Passivity theory provides a robust framework for analyzing the stability of feedback interconnections \cite{lozano2013dissipative, van2000l2}. Specifically, connecting two passive systems in a negative feedback loop ensures the stability of the resulting closed-loop system. In this section, we introduce various notions of passivity for dynamical systems, presented in both input-output operator frameworks and state-space representations.
\subsubsection{Input-output operator} \label{input-output operator passivity}
The dynamical system can be represented as an input-output operator $H: \mathbb{U} \to \mathbb{Y}$ where $\mathbb{U}, \mathbb{Y} \subset \mathbb{L}_e$. The input-output operator is:  
\begin{itemize}
    \item Passive if there exists a constant $\alpha \in \R$, such that 
    \begin{equation*}
    \langle Hu, u \rangle_T \ge \alpha, \qquad  \forall u \in \mathbb{U}, T \in \mathbb{R}_+. 
\end{equation*}
\item  $\delta$-passive if there exists a constant $\beta \in \R$ such that 
\begin{equation*}
\langle \dot{(Hu)},\dot{u}\rangle_T \geq \beta \qquad \forall u\in \mathbb{U}, T \in \mathbb{R}_+
\end{equation*}
\item Equilibrium-Independent passive (EI-passive) if for every equilibrium  $(u^*, Hu^*)$, there exists a constant $\gamma \in \R$ such that 
\begin{equation*}
    \langle Hu-Hu^*, u-u^*\rangle_T \geq \gamma \qquad \forall u \in \mathbb{U}, T \in \mathbb{R}_+ 
\end{equation*}
\item Incrementally passive if
\begin{equation*}
    \langle Hu-H\tilde{u}, u-\tilde{u} \rangle_T \geq 0 \qquad \forall u,\tilde{u} \in \mathbb{U}, T \in \mathbb{R}_+ 
\end{equation*}
\end{itemize}
Note that if \(H\) is incrementally passive, then by setting \(\tilde{u}=u^*\) (where \(u^*\) is an equilibrium point), we obtain 
\(
\langle Hu-Hu^*,\, u-u^*\rangle_T \ge 0,\quad \forall\, u\in\mathbb{U},
\)
which implies that \(H\) is EI-passive. 
\begin{remark}
If \(-H\) is \(\delta\)-passive, EI-passive, or incrementally passive, we refer to \(H\) as anti-\(\delta\)-passive, anti-EI-passive, or anti-incrementally passive, respectively.
\end{remark}
\subsubsection{State-space representation}
A dynamical system is represented in the state–space form as

\begin{equation}
    \begin{aligned}
    \dot{x} &= f(x,u), \qquad x(0)=x_0\\
    y & = g(x,u)
    \label{eq: state space model}
\end{aligned}
\end{equation}
where \(x\in\mathbb{R}^n\) is the state vector, \(u(t)\in\mathbb{R}^m\) is the input vector, \(y(t)\in\mathbb{R}^m\) is the output vector , and $x_0 \in \mathcal{M}$ is the initial condition. Let $(u^*,x^*,y^*)$ be an equilibrium condition, i.e., $f(x^*,u^*)=0$ and $y^*=h(x^*,u^*)$. The system \eqref{eq: state space model} is: 
\begin{itemize}
    \item Passive if there exists a continuously differentiable (i.e., $C^1$) storage function $V:\R^n \to \mathbb{R}_+$ such that for all $x_0 \in \mathcal{M}$, and $ T \in \mathbb{R}_+$, 
\begin{equation*}
   V(x(T))-V(x(0))\leq \int_0^T u(t)^T y(t) dt,
\end{equation*}
or equivalently,
\begin{equation*}
\dot{V}:= \nabla V(x) f(x,u) \leq u^T y. 
\end{equation*}
\item \(\delta\)-passive if there exists a $C^1$ storage function \(V_\delta:\mathbb{R}^n \times \mathbb{R}^n\to\mathbb{R}_+\) such that
\[
\dot{V}_\delta(x,\dot{x}) \le \dot{u}^\top \dot{y}.
\]
\item  EI-passive if, for any equilibrium point $(u^*,x^*,y^*)$, there exists a $C^1$ storage function $V_{x^*}(x): \mathbb{R}^n \to \R_+$ with $V_{x^*}(x^*)=0$ and for all $u\in \mathbb{R}^m$,
\begin{equation*}
    \dot{V}_{x^*}(x) \leq (u-u^*)^T(y-y^*).
\end{equation*}
\item Incrementally passive if,  for any two trajectories $(u,x,y)$ and $(\tilde{u},\tilde{x},\tilde{y})$, there exists a $C^1$ storage function $V_{\Delta}(x,\tilde{x}): \mathbb{R}^n \times \R^n \to \mathbb{R}_+$ satisfying  
\begin{equation*}
    \dot{V}_{\Delta}(x,\tilde{x}) \leq (u - \tilde{u})^T (y-\tilde{y}). 
\end{equation*}
\end{itemize}

In particular, if we choose $\tilde{u}(t)=u^*$, $\tilde{x}(0)=x^*$ ( so that $\tilde{y}(t)=y^*$), incremental passivity implies EI-passivity with a storage function $V_{x^*}
(x)=V_{\Delta}(x,x^*)$.
\begin{remark}
    For the linear-time-invariant (LTI) system \(\dot{x} = Ax+Bu, \; y = Cx+Du\) where $x\in \R^n$ and $u,y \in \R^m$, passivity, EI-passivity, $\delta$-passivity, and incremental passivity are equivalent. Moreover, this system is characterized by the transfer function matrix $H(s)=C(sI_n-A)^{-1}B + D$, where $H(s) \in \R^{m \times m}$ is said to be passive if the Hermitian matrix $\hat{H}(jw)+\hat{H}(jw)^*$ is positive semi-definite for all $\omega \in \R$.   
\end{remark}
\subsection{Population Games}
 We consider single population games. Although our results extend to multi-population games, the single population framework simplifies the notations. The payoff function \(F:\Delta_n \to \mathbb{R}^n\) defines the population game by assigning a payoff vector to each strategy \(x\in\Delta_n\). We adopt the following definition of Nash equilibrium for the population game \(F\).

\begin{definition}[Nash Equilibrium]
An element \(x\in\Delta_n\) is a Nash equilibrium for the population game \(F\) if
\[
x^\top F(x) \ge z^\top F(x) \quad \forall z\in\Delta_n.
\]
\end{definition}

A population game may have multiple Nash equilibria.  We denote the set of all Nash equilibria by \(\mathbb{NE}(F)\). In this paper, we focus on \emph{contractive} population games.

\begin{definition}[Contractive Games \cite{hofbauer2009stable}]\label{def: contracrive games}
A population game \(F\) is said to be \emph{contractive} if 
\[
(x-y)^\top \bigl(F(x)-F(y)\bigr) \le 0 \quad \forall  x,y\in\Delta_n.
\]
If equality holds if and only if \(x=y\), then \(F\) is called strictly contractive.
\end{definition}
In a contractive game, the set \(\mathbb{NE}(F)\) is convex; moreover, if \(F\) is strictly contractive, then it has a unique Nash equilibrium \cite{sandholm2015population}. 

\section{Passivity and No-regret} \label{sec: Passivity and No-regret}
In this section, we analyze the passivity properties of the continuous-time versions of (FTRL) dynamics \eqref{eq: FTRL dynamics},  (DP) dynamics \eqref{eq:DPD}, and their strategic higher-order variants. The finite regret property of a learning dynamic model is related to its passivity.  Specifically,  if the learning dynamic model is passive from $p$ to $x-\bar{x}$ for every fixed strategy $\bar{x} \in \Delta_n$, then the model achieves finite regret. 
\begin{theorem}[Passivity and No-regret \cite{cheung2021online}]\label{thrm:no_regret_theorem} 
    Any passive learning dynamics from $p$ to $x-\bar{x}$ for every $\bar{x} \in \Delta_n$ has finite regret. 
\end{theorem}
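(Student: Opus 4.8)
The plan is to unwind the definitions and read the regret off directly from the passivity inequality. Writing $H_{\bar x}$ for the input--output operator that maps the payoff stream $p(\cdot)$ to the signal $x(\cdot)-\bar x$, the hypothesis says that $H_{\bar x}$ is passive in the input--output sense recalled above: there is a constant $\alpha(\bar x)\in\R$, independent of the horizon $T$ and of the particular payoff trajectory, with $\langle H_{\bar x}p,\,p\rangle_T\ge\alpha(\bar x)$ for all $T\in\R_+$. I would first note that $p(\cdot)$ is continuously differentiable, hence lies in $\mathbb{L}_e$ on every finite interval, so all the truncated inner products below are finite and the operator statement applies.

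The only computation is to recognize the regret inside the truncated inner product. By the definition of $\langle\cdot,\cdot\rangle_T$,
\[
\langle H_{\bar x}p,\,p\rangle_T=\int_0^T\bigl(x(t)-\bar x\bigr)^{\!\top}p(t)\,dt=-\int_0^T p(t)^{\!\top}\bigl(\bar x-x(t)\bigr)\,dt=-R_T(\bar x).
\]
Passivity then gives $R_T(\bar x)\le-\alpha(\bar x)$ for every $T>0$ and every payoff trajectory, which is exactly the finite-regret bound for the comparator $\bar x$; running this for each $\bar x\in\Delta_n$ proves the theorem. If in addition one wants $\sup_{\bar x\in\Delta_n}R_T(\bar x)$ bounded by a single constant, I would invoke the state-space form of passivity: it supplies a nonnegative storage function $V_{\bar x}$ with $\dot V_{\bar x}\le p^{\top}(x-\bar x)$, whence $R_T(\bar x)\le V_{\bar x}(x(0))-V_{\bar x}(x(T))\le V_{\bar x}(x(0))$, and continuity of $\bar x\mapsto V_{\bar x}(x(0))$ over the compact simplex yields the uniform bound.

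The argument is essentially a one-line consequence of the definitions once the input and the output of the operator are matched with the two terms of the regret integrand. The only point that needs care --- and the reason the statement is about \emph{finite} rather than merely sublinear regret --- is that the passivity constant must be independent of both the time horizon $T$ and the exogenous payoff signal $p(\cdot)$; this is precisely what the passivity hypothesis grants, so I expect no real obstacle beyond keeping the bookkeeping (and, if a comparator-uniform bound is desired, the $\bar x$-dependence of the storage function) straight.
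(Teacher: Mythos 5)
Your proposal is correct and follows essentially the same route as the paper: recognizing that the truncated inner product $\langle x-\bar{x},\,p\rangle_T$ is exactly $-R_T(\bar{x})$, so the passivity lower bound (equivalently, integrating the storage-function inequality $\dot{V}\le p^\top(x-\bar{x})$ to get $R_T(\bar{x})\le V(\xi(0))$) immediately gives finite regret. The paper states the argument in the storage-function form, which you also include; your extra remark on the $\bar{x}$-dependence of the bound is a harmless refinement beyond what the paper records.
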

The proof is based on the fact that the passivity from $p$ to $x-\bar{x}$ for every $\bar{x} \in \Delta_n$ implies the existence of a positive semi-definite storage function $V(\xi)\geq 0$, where $\xi(t)$ represents the state vector of the learning dynamics, such that for all $T>0$  
\begin{equation*}
       \int_0^T p^T(x-\bar{x})dt \ge V(\xi(T))-V(\xi(0)). 
   \end{equation*}
   Hence, the regret for not using any fixed action $\bar{x} \in \Delta_n$ is 
   \begin{equation*}
       R_T(\bar{x})=\int_0^T p^T (\bar{x}-x) dt \leq V(\xi(0))-V(\xi(T)).
   \end{equation*}
   Since $V(\xi(T))\ge 0$ for all $T>0$, the above inequality simplifies to $R_T(\bar{x}) \leq V(\xi(0))$,  which shows that the regret $R_T(\bar{x})$ is bounded above by $V(\xi(0))$.  

Previous work \cite{cheung2021online} has shown that the continuous-time (FTRL) dynamics \eqref{eq: FTRL dynamics} is passive from $p$ to $x-\bar{x}$ for every $\bar{x} \in \Delta_n$.  Similarly, replicator dynamics (RD) \eqref{eq: RD equation}, a special case of (FTRL) dynamics,  is passive lossless from $p$ to $x-\bar{x}$ for every $\bar{x} \in \Delta_n$ \cite{mabrok2016passivity}.  The following theorem shows that the (DP) dynamics \eqref{eq:DPD} forms a passive mapping from $p$ to $x - \bar{x}$ for every $\bar{x} \in \Delta_n$ and hence has finite regret. 
 \begin{theorem} \label{thrm: DP finite regret}
        Direct Projection (DP) dynamics (\ref{eq:DPD}) has finite regret w.r.t. any $\bar{x} \in \Delta_n$. 
    \end{theorem}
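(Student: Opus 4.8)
The route is through Theorem~\ref{thrm:no_regret_theorem}: it is enough to show that the (DP) dynamics, whose state is $x$ itself, is passive from $p$ to $x-\bar{x}$ for every fixed $\bar{x}\in\Delta_n$; the finite-regret bound then follows verbatim from the argument recorded after that theorem, giving $R_T(\bar{x})\le V(x(0))$. For the storage function I would take the squared distance to $\bar{x}$,
\[
V(x)=\tfrac12\|x-\bar{x}\|_2^2\ \ge\ 0.
\]
Since $p(\cdot)$ is continuous and $\Pi_{T\Delta_n(x)}(p)$ is bounded over $\Delta_n$, solutions of \eqref{eq:DPD} exist, stay in $\Delta_n$, and are absolutely continuous (standard for projected dynamical systems), so $t\mapsto V(x(t))$ is absolutely continuous with $\dot{V}=(x-\bar{x})^\top\dot{x}=(x-\bar{x})^\top\Pi_{T\Delta_n(x)}(p)$ for a.e.\ $t$.

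The crux is to rewrite $\Pi_{T\Delta_n(x)}(p)$ using the fact that the tangent cone $T\Delta_n(x)$ and the normal cone $N\Delta_n(x)$ are mutually polar. Moreau's decomposition theorem then gives the orthogonal splitting $p=\Pi_{T\Delta_n(x)}(p)+\Pi_{N\Delta_n(x)}(p)$, so $\Pi_{T\Delta_n(x)}(p)=p-\Pi_{N\Delta_n(x)}(p)$ and hence
\[
\dot{V}=p^\top(x-\bar{x})-\Pi_{N\Delta_n(x)}(p)^\top(x-\bar{x}).
\]
Because $\Pi_{N\Delta_n(x)}(p)\in N\Delta_n(x)$ and $\bar{x}\in\Delta_n$, the defining inequality of the normal cone (applied with $x'=\bar{x}$) yields $\Pi_{N\Delta_n(x)}(p)^\top(\bar{x}-x)\le 0$, i.e.\ $-\Pi_{N\Delta_n(x)}(p)^\top(x-\bar{x})\le 0$. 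Therefore $\dot{V}\le p^\top(x-\bar{x})$, which is precisely passivity from $p$ to $x-\bar{x}$ with supply rate $p^\top(x-\bar{x})$. Integrating over $[0,T]$ and using $V\ge 0$ gives $R_T(\bar{x})=\int_0^T p^\top(\bar{x}-x)\,dt\le V(x(0))=\tfrac12\|x(0)-\bar{x}\|_2^2$, which is bounded by a constant (in fact at most $1$) independent of $T$ and of the payoff trajectory $p(\cdot)$; this is exactly finite regret.

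I expect the main obstacle to be the careful treatment of the tangent/normal cone decomposition at boundary points of $\Delta_n$ — where $T\Delta_n(x)$ is a proper cone rather than the linear tangent space $T\Delta_n$ — together with the well-posedness of the projected dynamics \eqref{eq:DPD}. Both are standard in the theory of projected dynamical systems (the polarity $N\Delta_n(x)=\big(T\Delta_n(x)\big)^{\circ}$, Moreau's theorem, and existence/uniqueness of Lipschitz solutions confined to $\Delta_n$), and once they are invoked the passivity estimate, and hence the theorem, is immediate.
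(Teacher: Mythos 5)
Your proposal is correct and follows essentially the same route as the paper: the storage function $V(x)=\tfrac12\|x-\bar{x}\|_2^2$, the decomposition $\Pi_{T\Delta_n(x)}(p)=p-\Pi_{N\Delta_n(x)}(p)$, the normal-cone inequality $(\bar{x}-x)^\top\Pi_{N\Delta_n(x)}(p)\le 0$, and then Theorem~\ref{thrm:no_regret_theorem}. The only difference is that you make explicit the Moreau decomposition and the well-posedness of the projected dynamics, which the paper's proof uses implicitly.
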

    \begin{proof}
     To apply Theorem \ref{thrm:no_regret_theorem}, we need to show that (DP) dynamics is passive from $p$ to $x-\bar{x}$ for every $\bar{x} \in \Delta_n$. For an arbitrary $\bar{x} \in \Delta_n$, consider the storage function
      \begin{equation*}
          V(x)= \frac{1}{2} \|x-\bar{x}\|_2^2.
      \end{equation*}
      Clearly, $V(x)\ge 0$ and $V(\bar{x})=0$. The derivative of $V(x)$ is
      \begin{equation*}
          \begin{aligned}
              \frac{d}{dt} V(x) &= (x-\bar{x})^T \dot{x} = (x-\bar{x})^T \Pi_{T\Delta_n(x)}(p) \\
              &= (x-\bar{x})^T(p-\Pi_{N\Delta_n(x)}(p)) \\
              &= (x-\bar{x})^T p + (\bar{x}-x)^T \Pi_{N\Delta_n(x)}(p)\\
              &\leq (x-\bar{x})^T p, 
          \end{aligned}
      \end{equation*}
where the last inequality follows because $\Pi_{N\Delta_n(x)}(p)$ lies in the normal cone of $\Delta_n$ at $x$, then $(\bar{x}-x)^T \Pi_{N\Delta_n(x)}(p) \leq 0$ for all $\bar{x} \in \Delta_n$. 
Hence, (DP) dynamics is a passive from $p$ to $(x-\bar{x})$ for every $\bar{x} \in \Delta_n$. By Theorem  \ref{thrm:no_regret_theorem}, it follows that (DP) dynamics has finite regret w.r.t. every $\bar{x} \in \Delta_n$. 
    \end{proof}
Furthermore, for a learning dynamic model, establishing finite regret does not require checking finite regret w.r.t. every point $\bar{x} \in \Delta_n$. In fact, it is sufficient to verify finite regret w.r.t. the vertices of the simplex. This is formally stated in the following proposition. 
\begin{proposition}
    A learning dynamic model has a finite regret w.r.t. any point $\bar{x} \in \Delta_n$ (excluding the vertices) if and only if it has a finite regret w.r.t. all vertices of $\Delta_n$.
\end{proposition}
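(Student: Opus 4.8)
The plan is to reduce everything to the observation that, for a fixed payoff trajectory $p(\cdot)$ and a fixed horizon $T$, the regret functional $\bar{x}\mapsto R_T(\bar{x}) = \int_0^T p(t)^T(\bar{x}-x(t))\,dt$ is \emph{affine} in $\bar{x}$ on $\Delta_n$. Indeed, setting $P_T:=\int_0^T p(t)\,dt$ and $Q_T:=\int_0^T p(t)^Tx(t)\,dt$ we have $R_T(\bar{x}) = P_T^T\bar{x}-Q_T$, and since $\mathbf{1}_n^T\bar{x}=1$ this equals $\sum_{i=1}^n \bar{x}_i\, R_T(e_i)$. Writing $V=\{e_1,\dots,e_n\}$ for the vertex set of $\Delta_n$, both implications will follow from two elementary facts about affine functions on the simplex: an affine map attains its maximum over $\Delta_n$ at a vertex, and $\Delta_n\setminus V$ is dense in $\Delta_n$ with $R_T(\cdot)$ continuous.

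For the ``if'' direction, assume the model has finite regret w.r.t.\ every vertex, so that for each payoff trajectory there is a constant $c$ with $R_T(e_i)\le c$ for all $i$ and all $T>0$ (a single such $c$ exists since there are finitely many vertices). Then for an arbitrary $\bar{x}=\sum_i \bar{x}_i e_i\in\Delta_n$ we get $R_T(\bar{x})=\sum_i \bar{x}_i R_T(e_i)\le \max_i R_T(e_i)\le c$, so the model has finite regret w.r.t.\ $\bar{x}$; in particular w.r.t.\ every non-vertex point.

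For the ``only if'' direction, assume the model has finite regret w.r.t.\ every non-vertex point; by the paper's definition of finite regret this provides, for each payoff trajectory, a common constant $c$ with $R_T(\bar{x})\le c$ for all $\bar{x}\in\Delta_n\setminus V$ and all $T>0$. Fix a vertex $e_j$ and let $b:=\tfrac1n\mathbf{1}_n$ denote the barycenter; the points $y_k:=(1-\tfrac1k)e_j+\tfrac1k b$ lie in $\mathrm{Int}(\Delta_n)$ (hence are non-vertices) and converge to $e_j$ as $k\to\infty$. Since $R_T(\cdot)$ is continuous (being affine), $R_T(e_j)=\lim_{k\to\infty}R_T(y_k)\le c$ for every $T>0$, which is finite regret w.r.t.\ $e_j$. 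Equivalently and more compactly, for each $T$ one has $\sup_{\bar{x}\in\Delta_n}R_T(\bar{x})=\max_i R_T(e_i)=\sup_{\bar{x}\in\Delta_n\setminus V}R_T(\bar{x})$, and taking the supremum over $T>0$ yields the stated equivalence.

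I expect the ``only if'' direction to be the only real obstacle: one cannot argue it by a convex-combination / interpolation estimate, because a vertex is an extreme point of $\Delta_n$ and therefore is \emph{not} a convex combination of any other points of the simplex. The argument must instead go through continuity of $R_T(\cdot)$ together with the fact that the finite-regret bound on the non-vertex points is uniform in $\bar{x}$ (as in the paper's definition), rather than merely finite separately for each $\bar{x}$; this uniformity is exactly what lets the bound survive the limit $y_k\to e_j$.
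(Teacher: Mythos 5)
Your proof is correct and follows essentially the same route as the paper's: the ``if'' direction via writing $\bar{x}=\sum_i \bar{x}_i e_i$ and bounding $R_T(\bar{x})=\sum_i \bar{x}_i R_T(e_i)$ by the (common) vertex bound, and the ``only if'' direction by approaching each vertex with non-vertex points and passing to the limit using continuity of $R_T(\cdot)$ together with the uniform-in-$\bar{x}$ bound. The affine-functional packaging and the explicit remark about needing uniformity are nice touches but do not change the argument.
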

\begin{proof}
    ($\Rightarrow$)Assume that the learning dynamic model has a finite regret w.r.t. all the vertices of $\Delta_n$.  That is, for each vertex $e_i$ of $\Delta_n$, there exists a constant $\beta>0$ such that $\int_0^T p^T (e_i-x)dt \leq \beta$ for all $T>0$. Because $\Delta_n$ is convex, any point $\bar{x}\in \Delta_n$ can be expressed as $\bar{x} = \sum_{i=1}^n a_i e_i$, where $\sum_{i=1}^n a_i =1$ and $a_i\geq 0$ for all $i=1, \dots,n$.  Then, the regret w.r.t any point $\bar{x} \in \Delta_n$  over $T>0$ is 
    \begin{equation*}
    \begin{aligned}
    R_T(\bar{x}) &= \int_0^T p^T(\bar{x}-x)dt = \int_0^T p^T(\sum_{i=1}^n a_i e_i -x) dt \\
       &= \sum_{i=1}^n a_i\int_0^T p^T(e_i-x) dt \leq \sum_{i=1}^n a_i \beta = \beta. 
    \end{aligned}
    \end{equation*}
    Thus the model has finite regret w.r.t. any point $\bar{x} \in \Delta_n$. 
    ($\Leftarrow$)Conversely, assume that the learning dynamic model has finite regret w.r.t. every $\bar{x}\in \Delta_n$ (excluding the vertices). That is, there exists $\beta >0$ such that  $\int_0^T p^T(\bar{x}-x) dt\leq \beta$ for each $\bar{x} \in \Delta_n$ (excluding the vertices) and all $T>0$. For every vertex $e_i$ of $\Delta_n$, choose a sequence $\{\bar{x}_k\} \subset \Delta_n$ with $\bar{x}_k \to e_i$ as $k\to \infty$. By the integral continuity, it follows that, $\lim_{k\to \infty} \int_0^T p^T(\bar{x}_k-x) dt = \int_0^T p^T(e_i-x) dt \leq \beta$  for all $T>0$. Hence, the regret w.r.t. each vertex is finite. 
\end{proof}

However (FTRL) dynamics \eqref{eq: FTRL dynamics} and (DP) dynamics \eqref{eq:DPD} have finite regret, they fail to converge in zero-sum games \cite{mertikopoulos2018cycles}. This limitation has motivated the development of a discounted version \cite{gao2020passivity}  and higher-order variants \cite{arslan2006anticipatory}, \cite{laraki2013higher},\cite{gao2023second} of (FTRL) dynamics to ensure convergence in such settings. In the subsequent theorems, we examine the finite regret properties of these modified versions, specifically, the strategic higher-order variants. 

The Strategic-Higher-Order FTRL (SHO-FTRL) dynamics is given by:
\begin{equation}
    \begin{aligned}
    \dot{z} &= p -\gamma(x-\xi), \; \; \; \dot{\xi} = \lambda(x-\xi) \\ 
           x&= C(z). 
\end{aligned}
\label{eq: SHO-FTRL}
\end{equation}
which  can be interpreted as a negative feedback interconnection between the continuous-time  (FTRL) dynamics \eqref{eq: FTRL dynamics} and the linear time invariant (LTI) system with the transfer function matrix $g(s)I_n$ where $g(s)= \gamma s /(s+ \lambda)$. Figure \ref{fig: SHO-FTRL} illustrates this interconnection. 
 \begin{figure}[H]
    \centering
    \includegraphics[scale=0.35]{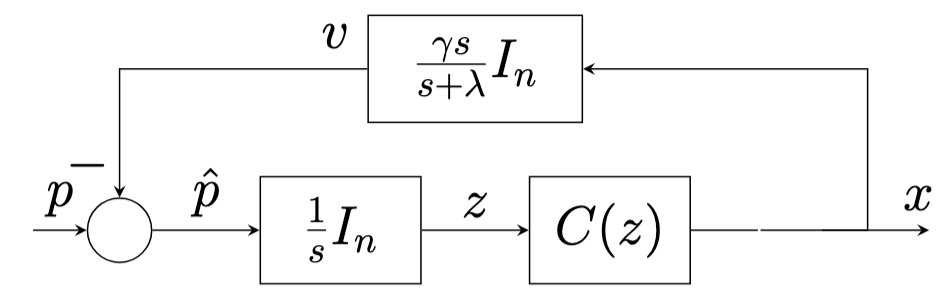}
    \caption{ (SHO-FTRL) dynamic model}
    \label{fig: SHO-FTRL}
\end{figure}

\begin{theorem} \label{thrm:SHO-FTRL no regret}
    (SHO-FTRL) dynamic model \eqref{eq: SHO-FTRL} has finite regret w.r.t. any $\bar{x} \in \Delta_n$. 
\end{theorem}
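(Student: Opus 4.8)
The plan is to invoke Theorem~\ref{thrm:no_regret_theorem}: it suffices to exhibit, for each fixed $\bar{x}\in\Delta_n$, a nonnegative storage function $V$ of the closed-loop state $(z,\xi)$ of \eqref{eq: SHO-FTRL} satisfying $\dot{V}\le p^\top(x-\bar{x})$. I would build $V$ as a sum of two pieces that mirror the feedback decomposition in Figure~\ref{fig: SHO-FTRL}: one storage function for the (FTRL) block and one (shifted) storage function for the LTI block $g(s)I_n$ with $g(s)=\gamma s/(s+\lambda)$.

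For the first piece, I would use the already-cited fact \cite{cheung2021online} that continuous-time (FTRL) \eqref{eq: FTRL dynamics} is passive from its payoff input to $x-\bar{x}$; concretely one may take the Fenchel coupling $V_1(z)=h(\bar{x})+h^*(z)-\bar{x}^\top z\ge 0$, where $h^*$ is the convex conjugate of $h$ and $C=\nabla h^*$, which gives the identity $\dot{V}_1=(x-\bar{x})^\top\dot{z}$. Since in (SHO-FTRL) we have $\dot{z}=p-\gamma(x-\xi)$, this yields $\dot{V}_1=(x-\bar{x})^\top p-\gamma(x-\bar{x})^\top(x-\xi)$. For the second piece, I would handle the LTI block $\dot{\xi}=\lambda(x-\xi)$ (output $\gamma(x-\xi)$) via an equilibrium-independent storage function: because $g$ has a zero at the origin, the equilibrium for the constant input $\bar{x}$ is $\xi=\bar{x}$ with zero output, so I take $V_2(\xi)=\tfrac{\gamma}{2\lambda}\|\xi-\bar{x}\|_2^2\ge 0$ and compute $\dot{V}_2=\gamma(\xi-\bar{x})^\top(x-\xi)$. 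Adding the two, the cross terms telescope:
\[ \dot{V}_1+\dot{V}_2=(x-\bar{x})^\top p+\gamma\big[(\xi-\bar{x})-(x-\bar{x})\big]^\top(x-\xi)=(x-\bar{x})^\top p-\gamma\|x-\xi\|_2^2\le (x-\bar{x})^\top p, \]
using $\gamma,\lambda>0$. Hence $V=V_1+V_2\ge 0$ certifies that (SHO-FTRL) is passive from $p$ to $x-\bar{x}$, and Theorem~\ref{thrm:no_regret_theorem} delivers finite regret for every $\bar{x}\in\Delta_n$, in fact with the explicit bound $R_T(\bar{x})\le V(z(0),\xi(0))$.

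The one subtlety I would emphasize is the treatment of the LTI feedback block: one cannot merely say ``the negative feedback of two passive systems is passive,'' because the output of interest is the shifted signal $x-\bar{x}$ while the LTI block is driven by $x$. The derivative action of $g(s)$ (its zero at $s=0$) is precisely what allows the shifted quadratic $V_2$ to absorb this mismatch, and checking the sign of the leftover dissipation term $-\gamma\|x-\xi\|_2^2$ is the crux of the argument. Everything else — nonnegativity of $V_1$ via Fenchel--Young, the coupling identity $\dot{V}_1=(x-\bar{x})^\top\dot{z}$, and the final appeal to Theorem~\ref{thrm:no_regret_theorem} — is routine.
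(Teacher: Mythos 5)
Your proof is correct and takes essentially the same route as the paper: a Fenchel-coupling storage function for the (FTRL) block plus a shifted quadratic storage function for the positive-real LTI block $\gamma s/(s+\lambda)$, summed to certify passivity from $p$ to $x-\bar{x}$ and then fed into Theorem~\ref{thrm:no_regret_theorem}. If anything, your explicit choice $V_2=\tfrac{\gamma}{2\lambda}\|\xi-\bar{x}\|_2^2$ and the resulting dissipation term $-\gamma\|x-\xi\|_2^2$ make precise the step the paper handles abstractly via EI-passivity of the LTI block (its footnoted storage $\tfrac12\|\xi-\bar{\xi}\|_2^2$ satisfies the required supply-rate inequality only when $\gamma=\lambda$, so your scaling is the right one in general).
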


\begin{proof}
    The LTI system 
    \begin{equation}
        \dot{\xi} = \lambda(x-\xi), \; \; v = \gamma(x-\xi)
        \label{eq: LTI SHO-FTRL}
    \end{equation}
is characterized by the transfer function matrix $G(s)= (\gamma s /(s+\lambda)) I_n$. Given that $\lambda,\gamma>0$, then $\text{Re}(\hat{g}(j\omega))= (\gamma \omega^2/(\lambda^2 + \omega^2)) \geq 0 $ and the hermitian matrix $G(jw)+G(jw)^*$ is positive semi-definite for all $\omega \in \mathbb{R} $. Therefore, the LTI system \eqref{eq: LTI SHO-FTRL} is passive. Since, for LTI systems, passivity is equivalent to EI-passivity, the system \eqref{eq: LTI SHO-FTRL} is EI-passive. Hence, for fixed  $\bar{x}\in \Delta_n$ and the equilibrium point ($\bar{\xi}=\bar{x}$ and $\bar{v}=0$),  there exists a storage function  $V_{\bar{\xi}}(\xi) \geq 0  $ \footnote {$V_{\bar{\xi}}(\xi)=\|\xi-\bar{\xi}\|_2^2/2$}and $V_{\bar{\xi}}(\bar{\xi})=0$ such that $\dot{V}_{\bar{\xi}}(\xi) \leq v^T (x-\bar{x})$. For the feedback interconnection, define the storage function $$V(z,\xi)= \max_{y \in \Delta_n} (z^T y - h(y)) -(z^T \bar{x}-h(\bar{x}))+ V_{\bar{\xi}}(\xi).$$
Clearly, $V(z,\xi)\geq 0$. The derivative of $V(z,\xi)$ is 
\begin{equation*}
    \begin{aligned}
        \frac{d}{dt} V(z,\xi) &= (x-\bar{x})^T \dot{z} + \dot{V}_{\bar{\xi}}(\xi)   \\
        & \leq (x-\bar{x})^T \hat{p} + (x-\bar{x})^T v   \\
        &= (x-x^T)(p-v) + (x-\bar{x})^T v = (x-\bar{x})^T p. 
    \end{aligned}
\end{equation*}
Thus the negative feedback interconnection is passive from $p$ to $x-\bar{x}$ for every $\bar{x} \in \Delta_n$. Applying theorem \ref{thrm:no_regret_theorem} implies that (SHO-FTRL) dynamics has finite regret. 
\end{proof}
Similarly, the Strategic-Higher-Order DP (SHO-DP) dynamics can be written as:  
\begin{equation}
    \begin{aligned}
    \dot{x}&= \Pi_{T\Delta_n(x)}(p-\gamma(x-\xi)) \\
    \dot{\xi}&= \lambda(x-\xi) 
\end{aligned}
\label{eq: SHO-DPD}
\end{equation}
Figure~\ref{fig:SHO-DPD} illustrates the block diagram of the (SHO-DP) dynamic model.

\begin{figure}[H]
    \centering
    \includegraphics[scale=0.35]{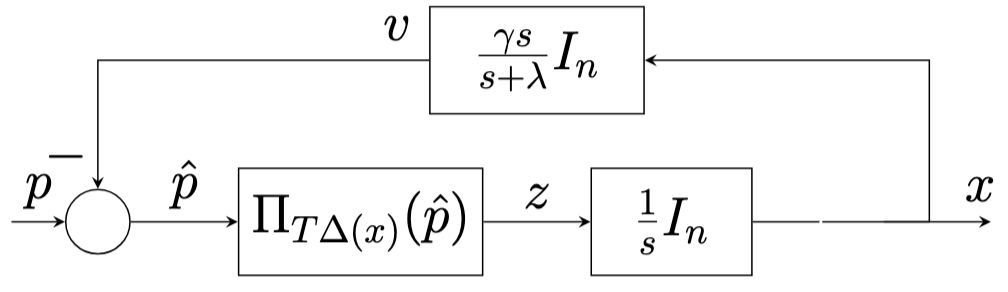}
    \caption{ (SHO-DP) dynamic model}
    \label{fig:SHO-DPD}
\end{figure}
\begin{theorem}
    (SHO-DP) dynamics \eqref{eq: SHO-DPD} has  finite regret w.r.t every $\bar{x} \in \Delta_n$. 
\end{theorem}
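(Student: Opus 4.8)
The plan is to mimic the structure of the proof of Theorem~\ref{thrm:SHO-FTRL no regret}, exhibiting the (SHO-DP) model as a negative feedback interconnection of the (DP) dynamics \eqref{eq:DPD} with the same LTI block \eqref{eq: LTI SHO-FTRL} having transfer function $g(s)I_n$ with $g(s)=\gamma s/(s+\lambda)$, and then constructing a combined storage function for the closed loop. By Theorem~\ref{thrm:no_regret_theorem}, it suffices to show the interconnection is passive from $p$ to $x-\bar{x}$ for every fixed $\bar{x}\in\Delta_n$.

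First I would fix an arbitrary $\bar{x}\in\Delta_n$ and, exactly as in the (DP) proof (Theorem~\ref{thrm: DP finite regret}), take the forward-path storage function $V_1(x)=\tfrac12\|x-\bar{x}\|_2^2$, while for the feedback LTI block \eqref{eq: LTI SHO-FTRL} I would use its EI-passivity at the equilibrium $(\bar{\xi}=\bar{x},\ \bar v=0)$ with storage function $V_2(\xi)=\tfrac12\|\xi-\bar{\xi}\|_2^2$, giving $\dot V_2\le v^T(x-\bar{x})$ where $v=\gamma(x-\xi)$. Define the combined storage function $V(x,\xi)=V_1(x)+V_2(\xi)\ge 0$. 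The input to the (DP) block is now the perturbed payoff $\hat p:=p-v=p-\gamma(x-\xi)$, so $\dot x=\Pi_{T\Delta_n(x)}(\hat p)$.

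The key computation is the derivative of $V_1$ along \eqref{eq: SHO-DPD}. Repeating the decomposition used in Theorem~\ref{thrm: DP finite regret} with $p$ replaced by $\hat p$, and using $\Pi_{T\Delta_n(x)}(\hat p)=\hat p-\Pi_{N\Delta_n(x)}(\hat p)$ together with $(\bar{x}-x)^T\Pi_{N\Delta_n(x)}(\hat p)\le 0$, I get
\begin{equation*}
\dot V_1(x)=(x-\bar{x})^T\Pi_{T\Delta_n(x)}(\hat p)\le (x-\bar{x})^T\hat p=(x-\bar{x})^T p-(x-\bar{x})^T v.
\end{equation*}
Adding $\dot V_2\le (x-\bar{x})^T v$ gives $\dot V(x,\xi)\le (x-\bar{x})^T p$, i.e. the interconnection is passive from $p$ to $x-\bar{x}$. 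Since $\bar{x}\in\Delta_n$ was arbitrary, Theorem~\ref{thrm:no_regret_theorem} yields finite regret w.r.t.\ every $\bar{x}\in\Delta_n$, completing the proof.

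The argument is essentially routine once the feedback interpretation is in place; the only point requiring a little care is verifying that the normal-cone inequality $(\bar{x}-x)^T\Pi_{N\Delta_n(x)}(\hat p)\le 0$ still applies when the projected vector is the modified payoff $\hat p$ rather than $p$ — but this holds because it depends only on $\Pi_{N\Delta_n(x)}(\hat p)$ lying in $N\Delta_n(x)$ and on $\bar{x}\in\Delta_n$, independent of what vector was projected. A secondary subtlety, also present in the (SHO-FTRL) proof, is that the LTI block's EI-passivity certificate must be the one centered at the specific equilibrium $\bar\xi=\bar{x}$ matching the chosen $\bar{x}$; since passivity and EI-passivity coincide for this LTI system (as noted in the Remark), this is immediate.
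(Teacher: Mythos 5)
Your proposal is correct and follows essentially the same route as the paper's own proof: the identical combined storage function $\tfrac12\|x-\bar{x}\|_2^2+V_{\bar{\xi}}(\xi)$, the same projection/normal-cone decomposition with the perturbed payoff $\hat{p}=p-v$, and the same appeal to Theorem~\ref{thrm:no_regret_theorem}. No gaps; your remarks on why the normal-cone inequality still applies with $\hat{p}$ and on centering the LTI storage at $\bar{\xi}=\bar{x}$ are accurate clarifications of steps the paper leaves implicit.
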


\begin{proof}
     Let \(\bar{x} \in \Delta_n\) be arbitrary. Using the EI-passivity of the LTI system  \eqref{eq: LTI SHO-FTRL} and the corresponding storage function $V_{\bar{\xi}}(\xi)$, consider the storage function 
     \begin{equation*}
         V(x,\xi)= \frac{1}{2} \|x-\bar{x}\|_2^2 + V_{\bar{\xi}}(\xi). 
     \end{equation*}
     The time derivative of $V(x,\xi)$ is
     \begin{align*}
         \frac{d}{dt} V(x,\xi) &= (x-\bar{x})^T \dot{x}+\dot{V}_{\bar{\xi}}(\xi) \\
         &\leq (x-\bar{x})^T \Pi_{T\Delta_n(x)}(\hat{p}) + (x-\bar{x})^T v \\
         &= (x-\bar{x})^T (\hat{p}-\Pi_{N\Delta_n(x)}(\hat{p}))  + (x-\bar{x})^T v \\
         & \leq (x-\bar{x})^T (p-v) + (x-\bar{x})^T v \\
         &= (x-\bar{x})^T p. 
     \end{align*}
     Therefore, the (SHO-DP) dynamics is passive  from $p$ to $x-\bar{x}$.  Theorem \ref{thrm:no_regret_theorem} implies that it has finite regret w.r.t. every $\bar{x}\in \Delta_n$. 
\end{proof}
Building on our demonstration that  FTRL, DP, SHO-FTRL, and SHO-DP dynamics have finite regret, we now extend our analysis to the evolutionary dynamic models used in population games. These models describe how the population state, represented by a strategy $x(t) \in \Delta_n$, evolves in response to a payoff vector $p(t) \in \R^{n}$. We consider models formulated in state-space form as
\begin{equation}
    \dot{x}(t) = \mathcal{V}(x(t),p(t))
\end{equation}
 where $\dot{x}(t)$ lies in the tangent space of $\Delta_n$. We assume that $\mathcal{V}(x,p)$ is well defined in the sense that it is continuous in $(x(t),p(t))$ and for every initial condition $x(0) \in \Delta_n$ and every payoff trajectory $p(\cdot)$, there exists a unique solution $x(t)$ for all $t>0$.  Examples of such evolutionary dynamic models include: 
\begin{enumerate}
    \item Replicator dynamics (RD) \eqref{eq: RD equation}.
    \item Direct projection (DP) dynamics \eqref{eq:DPD}.
    \item BNN dynamics \cite{brown1950solutions}:
    \begin{equation}
        \dot{x}_i = [p_i-x^Tp]_+ - x_i \sum_{j=1}^n [p_j-x^Tp]_+
        \label{eq: BNN dynamics}
    \end{equation}
    \item Smith dynamics \cite{smith1984stability}: 
    \begin{equation}
        \dot{x}_i = \sum_{j=1}^n x_j [p_i-p_j]_+ - x_i \sum_{j=1}^n [p_j-p_i]_+
        \label{eq: Smith dynamics}
    \end{equation}
    \item Logit dynamics \cite{hofbauer2007evolution}:
    \begin{equation}
        \dot{x} = \sigma(p) -x 
        \label{eq: logit dynamics}
    \end{equation}
    \item Target Projection (TP) dynamics \cite{tsakas2009target}
     \begin{equation}
    \dot{x} = -x+\Pi_\Delta(x+p) 
    \label{eq: target projection dynamics}
    \end{equation}
    \item Exponential replicator dynamics (Ex-RD)\cite{gao2020passivity}
    \begin{equation}
        \begin{aligned}
            \dot{z} & = \lambda(p-z) \\
            x & = \sigma(z) 
        \end{aligned}
        \label{eq: EX-RD}
    \end{equation}
\end{enumerate}
The following example is used to study the finite regret property of the introduced evolutionary dynamic models above. 
\begin{example} \label{ex: no regret counter example}
    Consider an environment where the payoff vector is given by  $p(t) = \begin{bmatrix}
        \sin(t) & 0.5
    \end{bmatrix}^T$, and the learner's decision at time $t$ is  
     $x(t) = \begin{bmatrix}
        x_1(t) & x_2(t)
    \end{bmatrix}^T$, with $x_1(t)$ and $x_2(t)$ denoting the probability of selecting $p_1(t) =sin(t)$ and $p_2(t)=0.5$ respectively, for all $t \in \mathbb{R}_+$.  
\end{example}
For this environment, the optimal strategy is to choose $p_1(t)$ with probability $1$ if $\sin(t)\geq0.5$ and $p_2(t)$ with probability $1$ if $\sin(t) < 0.5$, i.e., 
\small{
\[
x_{\mathrm{opt}}(t) =
\begin{cases}
\begin{bmatrix} 1 & 0 \end{bmatrix}^T, & \text{if } \sin(t) \ge 0.5, \\[2mm]
\begin{bmatrix} 0 & 1 \end{bmatrix}^T, & \text{if } \sin(t) < 0.5.
\end{cases}
\]}
Under this strategy, the average reward is 
\begin{equation*}
    \lim_{T \to \infty} \frac{1}{T} \int_0^T \max(\sin(t),0.5) dt \approx 0.609
\end{equation*}

 In contrast, If the learner were to consistently choose $p_1(t)$ (i.e. $x(t)= \begin{bmatrix} 1 & 0\end{bmatrix}^T \; \forall t \in \R_+$), the average reward would be 0, and if only $p_2(t)$ were selected (i.e. $x(t) = \begin{bmatrix} 0 & 1\end{bmatrix}^T \; \forall t \in \R_+$), the average reward would be 0.5. More generally, from the convexity of $\Delta_2$, choosing any fixed strategy for all $t \in \R_+$, the learner will get an average reward between $0$ and $0.5$. 

Implementing BNN, Smith, Logit, and (TP) dynamics on the environment introduced in Example \ref{ex: no regret counter example} yields average rewards that converge to approximately \(0.374, 0.453, 0.36, 0.467 \) respectively. These values are lower than the average reward of \(0.5\) that would be achieved by consistently selecting $p_2(t)$ (i.e., \(x(t) = e_2 \; \forall t \in \R_+\)), as shown in Figure \ref{fig:EDMs performacne in Example 1}.  Consequently, $\int_0^T p^T (x-e_2)dt$ diverges to \(-\infty\) as \(T\to\infty\); that is, \(\langle p,x-e_2\rangle_T\) does not have a lower bound.  This indicates that  BNN, Smith, Logit, and (TD) dynamics are not  passive from \(p\) to \(x-e_2\) and therefore,  do not guarantee finite regret for any environment. 
\begin{figure}[H]
    \centering
    \includegraphics[scale=0.35]{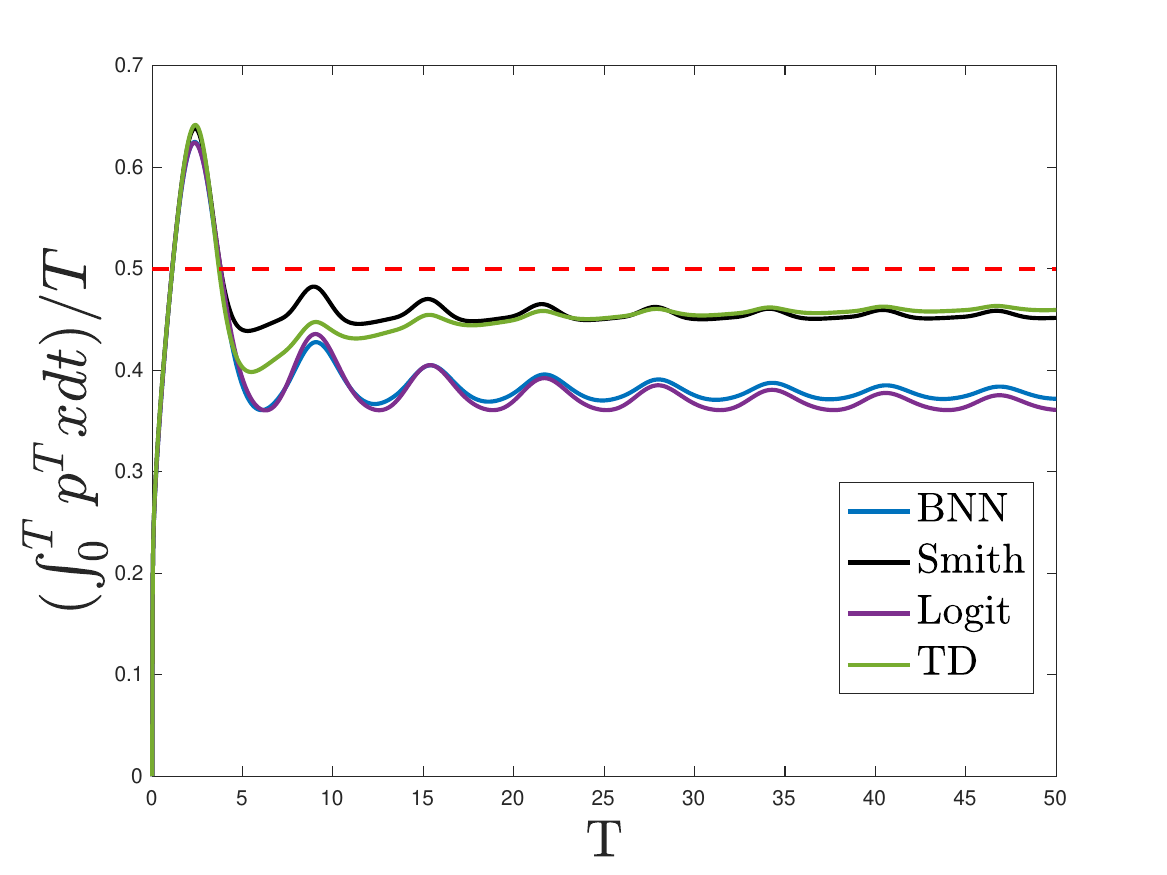}
    \caption{BNN, Smith, Logit, (TD) dynamics performance in the environment introduced in Example \ref{ex: no regret counter example}}
    \label{fig:EDMs performacne in Example 1}
\end{figure}
Consider the case where $p(t) =e_1=\begin{bmatrix}
    1 &0
\end{bmatrix}^T$. In this setting, if the learner always selects $p_1(t)$ (i.e., $x(t) = e_1 \;\forall \; t \in \R_+$ ), the average reward is $1$. However, under (Ex-RD) \eqref{eq: EX-RD}, the strategy $x(t)$ converges to $\sigma(e_1)$ yielding an average reward $e_1 \sigma(e_1) \leq 1$ which indicates that (Ex-RD) \eqref{eq: EX-RD} is not passive from $p$ to $x-e_1$ and hence does not have finite regret. 

\begin{tcolorbox}
 \textbf{Summary of Regret Properties}:
 \begin{itemize}
      \item (RD) \eqref{eq: RD equation}, (DP) \eqref{eq:DPD}, (SHO-FTRL) \eqref{eq: SHO-FTRL}, and (SHO-DPD) \eqref{eq: SHO-DPD} dynamic models have finite regret.
      \item BNN \eqref{eq: BNN dynamics}, Smith \eqref{eq: Smith dynamics}, Logit \eqref{eq: logit dynamics}, (TD) \eqref{eq: target projection dynamics} and (Ex-RD) \eqref{eq: EX-RD} dynamic models \textbf{do not} have finite regret.
    \end{itemize}
\end{tcolorbox}

\section{Finite regret Fragility} \label{sec: Finite regret fragility}
In particular settings, payoff vectors are often subject to delays or dynamic modifications rather than being received instantaneously. This section explores how such modifications affect the finite regret guarantees of standard learning dynamics. As an example, we consider replicator dynamics with latency, where the effective payoff is replaced by a delayed or filtered version of the actual payoff signal. Specifically, the dynamics are given by
\begin{equation}
\begin{aligned}
    \dot{x}_i &= x_i(\hat{p}_i-x^T\hat{p}) \\ 
    \dot{\hat{p}} &= \lambda(p-\hat{p}),
\end{aligned}
\label{eq: RD with latency}
\end{equation}
where $\lambda>0$ and $\hat{p}$ is the delayed version of $p$. 
The following theorem shows that replicator dynamics with latency \eqref{eq: RD with latency} does not have finite regret. 

\begin{theorem}\label{thrm: RD with latency} 
    The replicator dynamics with latency \eqref{eq: RD with latency} does not have finite regret. 
\end{theorem}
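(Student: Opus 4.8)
To show that \eqref{eq: RD with latency} does not have finite regret it suffices to exhibit one bounded payoff trajectory $p(\cdot)$ and one fixed strategy $\bar{x}\in\Delta_n$ for which $R_T(\bar{x})\to+\infty$. The plan is to isolate the ``defect'' caused by replacing $p$ with its low-pass filtered version $\hat{p}$, show that this defect equals (up to bounded boundary terms) $\tfrac{1}{\lambda}$ times an accumulated ``payoff variance'' $\int_0^T \mathrm{Var}_{x(t)}(\hat{p}(t))\,dt$, and then pick a persistently oscillating $p$ so that this integral grows linearly in $T$. Concretely, I would write $R_T(\bar{x})=\int_0^T \hat{p}^{T}(\bar{x}-x)\,dt+\int_0^T (p-\hat{p})^{T}(\bar{x}-x)\,dt=:I_1+I_2$. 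For $I_1$, recall that replicator dynamics is (lossless) passive from its payoff input to $x-\bar{x}$ with storage function $V_{\bar{x}}(x)=\sum_i \bar{x}_i\log(\bar{x}_i/x_i)\ge 0$; applied with input $\hat{p}$ this gives $\dot{V}_{\bar{x}}=(x-\bar{x})^{T}\hat{p}$, hence $I_1=V_{\bar{x}}(x(0))-V_{\bar{x}}(x(T))$. For $I_2$, use $p-\hat{p}=\tfrac{1}{\lambda}\dot{\hat{p}}$ and integrate by parts to get
\begin{equation*}
I_2=\frac{1}{\lambda}\Big[\hat{p}(T)^{T}(\bar{x}-x(T))-\hat{p}(0)^{T}(\bar{x}-x(0))+\int_0^T \hat{p}^{T}\dot{x}\,dt\Big],
\end{equation*}
and observe that along \eqref{eq: RD with latency} one has $\hat{p}^{T}\dot{x}=\sum_i x_i\hat{p}_i^2-(x^{T}\hat{p})^2=\mathrm{Var}_{x}(\hat{p})\ge 0$. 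Therefore, if $p$ is bounded (so $\hat{p}$ and the boundary terms are bounded) and if $x(t)$ stays in a compact subset of $\mathrm{Int}(\Delta_n)$ (so $V_{\bar{x}}(x(T))$ stays bounded), there is a constant $C$ with
\begin{equation*}
R_T(\bar{x})\ \ge\ -C+\frac{1}{\lambda}\int_0^T \mathrm{Var}_{x(t)}(\hat{p}(t))\,dt \qquad \text{for all } T>0 .
\end{equation*}

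It then remains to choose the data so that $\int_0^\infty \mathrm{Var}_{x(t)}(\hat{p}(t))\,dt=+\infty$ while $x$ stays interior. I would take $n=2$, $p(t)=(\sin t,\,0)^{T}$, $x(0)=(\tfrac12,\tfrac12)$, $\hat{p}(0)=0$, and $\bar{x}=e_2$ (for which $V_{e_2}(x(0))=-\log\tfrac12<\infty$). In the logit coordinate $y=\log(x_1/x_2)$ the $x$-equation collapses to $\dot{y}=\hat{p}_1-\hat{p}_2=:q$, while $q$ obeys the stable scalar filter $\dot{q}=\lambda(\sin t-q)$ with $q(0)=0$; hence $q(t)$ converges to the nonzero periodic steady-state response $\tfrac{\lambda}{\sqrt{\lambda^2+1}}\sin\!\big(t-\arctan\tfrac1\lambda\big)$, its running integral $\int_0^t q$ stays bounded, and so $y(t)$ stays bounded and $x_1(t),x_2(t)$ stay bounded away from $0$ and $1$. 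Consequently $\mathrm{Var}_{x(t)}(\hat{p}(t))=x_1x_2\,q(t)^2\ge c\,q(t)^2$ for some $c>0$ and all $t$, and since $q^2$ has a strictly positive long-run average, $\int_0^T \mathrm{Var}_{x(t)}(\hat{p}(t))\,dt$ grows linearly in $T$. Combined with the displayed lower bound this forces $R_T(e_2)\to+\infty$, so \eqref{eq: RD with latency} does not have finite regret.

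\textbf{Expected main obstacle.} The only nontrivial step is keeping $\mathrm{Var}_{x(t)}(\hat{p}(t))$ from vanishing in the limit, i.e.\ keeping the trajectory away from the vertices of $\Delta_n$: if $x$ were to converge to a vertex, $x_1x_2\to 0$ would kill the integrand regardless of $\hat{p}$. This is exactly what the logit linearization delivers in the two-strategy case, because the filtered sinusoid $q$ has zero long-run mean and bounded running integral; everything else (the integration-by-parts identity, the nonnegativity of $\hat{p}^{T}\dot{x}$, and the boundedness of $\hat{p}$) is routine. A secondary point to check carefully is that $V_{e_2}(x(t))=-\log x_2(t)$ remains bounded, which follows from the same interior bound on $x_2(t)$.
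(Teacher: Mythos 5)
Your proposal is correct, and its core identity is the same as the paper's: writing the regret as (bounded replicator terms) plus $\tfrac{1}{\lambda}$ times boundary terms plus $\tfrac{1}{\lambda}\int_0^T \hat{p}^\top(\mathrm{diag}(x)-xx^\top)\hat{p}\,dt$, the last integrand being exactly your $\mathrm{Var}_{x}(\hat{p})=x_1x_2(\hat{p}_1-\hat{p}_2)^2$ in the two-strategy case. The difference is in how the argument is finished. The paper stops at the qualitative observation that the variance integrand vanishes only when $\hat{p}$ is proportional to $\mathbf{1}_n$ or $x$ is at a vertex, asserts that trajectories recurrently bounded away from these conditions accumulate unbounded regret, and then supports this with a numerical example ($p(t)=(\sin t,-\sin t)^\top$). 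You instead close the loop analytically: you pick $p(t)=(\sin t,0)^\top$, reduce the state to the logit coordinate $y=\log(x_1/x_2)$ with $\dot{y}=q:=\hat{p}_1-\hat{p}_2$ and $\dot{q}=\lambda(\sin t-q)$, solve the scalar filter explicitly to show $q$ settles on a nonzero sinusoid with bounded running integral, conclude that $x$ stays in a compact subset of $\mathrm{Int}(\Delta_2)$ (which simultaneously controls the $-\log x_2(T)$ storage term and keeps $x_1x_2$ bounded below), and hence that the variance integral grows linearly, forcing $R_T(e_2)\to+\infty$. This supplies precisely the step the paper leaves informal — ruling out the escape-to-vertex scenario that would kill the variance term — so your version is a strictly more complete proof of the theorem, at the cost of being tied to one explicit trajectory rather than the paper's broader (but unproven) "recurring behavior" heuristic. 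Your use of the KL-divergence storage function for the $\hat{p}$-driven replicator part is also a slightly cleaner packaging of the paper's $\int \dot{x}_i/x_i\,dt$ bookkeeping, and it handles general $\bar{x}$ rather than only vertices; all the individual computations (the integration by parts, the sign of $\hat{p}^\top\dot{x}$, the filter gain $\lambda/\sqrt{\lambda^2+1}$) check out.
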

\begin{proof}
    First, we have, 
    \begin{equation*}
        \begin{aligned}
            \frac{\dot{x}_i}{x_i} &=  (p_i-\frac{1}{\lambda} \dot{\hat{p}}_i) -x^T (p-\frac{1}{\lambda}\dot{\hat{p}}) \\
        & = (p_i-x^T p) - \frac{1}{\lambda} (\dot{\hat{p}}_i-x^T \dot{\hat{p}})
        \end{aligned}
    \end{equation*}
    Rearranging and integrating both sides over \([0,T]\) gives the regret with respect to the vertex \(e_i\) as 
    \begin{equation*}
    \begin{aligned}
         R_T(e_i) &= \int_0^T (p_i-x^T p) dt = \int_0^T  \frac{\dot{x}_i}{x_i} dt  + \int_0^T  \frac{1}{\lambda} (\dot{\hat{p}}_i-x^T \dot{\hat{p}}) dt\\
         &= \underbrace{\int_0^T \frac{\dot{x}_i}{x_i} dt}_{a(t)} + \frac{1}{\lambda}\int_0^T  \dot{\hat{p}} dt - \frac{1}{\lambda} \int_0^T x^T \dot{\hat{p}} dt. 
    \end{aligned}
    \end{equation*}
    Integration the last term by parts yields,  
    \begin{equation*}
         R_T(e_i) = a(t) + \frac{1}{\lambda} \underbrace{ \left( \int_0^T \dot{\hat{p}} dt - x(t)^T \hat{p}(t) \Big|_0^T\right)}_{b(t)}  + \frac{1}{\lambda} \underbrace{\int_0^T \hat{p}^T\dot{x} dt}_{c(t)} 
    \end{equation*}
\noindent
\emph{Bound on \(a(t)\).} Note that
\(
a(t)= \log\bigl(x_i(T)\bigr)-\log\bigl(x_i(0)\bigr)
\le -\log\bigl(x_i(0)\bigr),
\)
which is bounded above.

\noindent
\emph{Bound on \(b(t)\).} Since we assume \(p\) is bounded, \(\hat{p}\) is also bounded. Hence,
\(
b(t)=\hat{p}_i(T)-\hat{p}_i(0)-x(T)^\top \hat{p}(T)+x(0)^\top \hat{p}(0)
\)
is uniformly bounded.

\noindent
\emph{Analysis of \(c(t)\).} For the last term,
 \begin{equation*}
     \begin{aligned}
         c(t)&=  \int_0^T  \sum_{i=1}^n \hat{p}_i^T \dot{x}_i \; dt \\
         &=  \int_0^T \sum_{i=1}^n \left(\hat{p}_i^T x_i(\hat{p}_i-x^T \hat{p})\right) \; dt  \\
         &=  \int_0^T \left(\sum_{i=1}^n x_i \hat{p}_i^2 - (x^T \hat{p}) \sum_{i=1}^n \hat{p}_ix_i \right) dt \\
         &= \int_0^T {\hat{p}(t)}^T \underbrace{\left(\text{diag}(x(t))-x(t)x(t)^T\right)}_{Q(t)}\hat{p}(t) \; dt 
     \end{aligned}
 \end{equation*}
 The matrix $Q(t)$ is positive semi-definite and has zero eigenvalue associated to the eigenvector $\mathbf{1}_n$  \cite[Proposition~2]{gao2017properties}. Consequently, the integrand is nonnegative and vanishes precisely if \(\hat{p}(t)=c\mathbf{1}_n\) for some constant \(c\in\mathbb{R}\) or \(x(t)\) is a vertex. Hence, trajectories that experience recurring behaviors bounded away from these conditions  accumulate unbounded regret, violating the finite regret property.  
 
\end{proof}

\begin{example} \label{ex: RD_latency}
    Consider the case where $p(t)= \begin{bmatrix}
        \sin(t) & -\sin(t)
    \end{bmatrix}^T $.  Choosing either $p_1(t)$ or $p_2(t)$ for all $t \in \R_+$ with probability 1 results in an average reward of zero. However, when applying the replicator dynamics with latency \eqref{eq: RD with latency}, the average reward is approximately \(-0.106\), which is strictly worse than consistently selecting either \( p_1(t) \) or \( p_2(t) \) for all \( t \in \mathbb{R}_+ \). Figure~\ref{fig: RD with latency performace} displays the average reward (left) and the accumulation of regret w.r.t. \(e_1\) and \(e_2\) (right), illustrating the failure to maintain finite regret.  
\end{example}

\begin{figure}[H]
    \centering
    \includegraphics[scale=0.4]{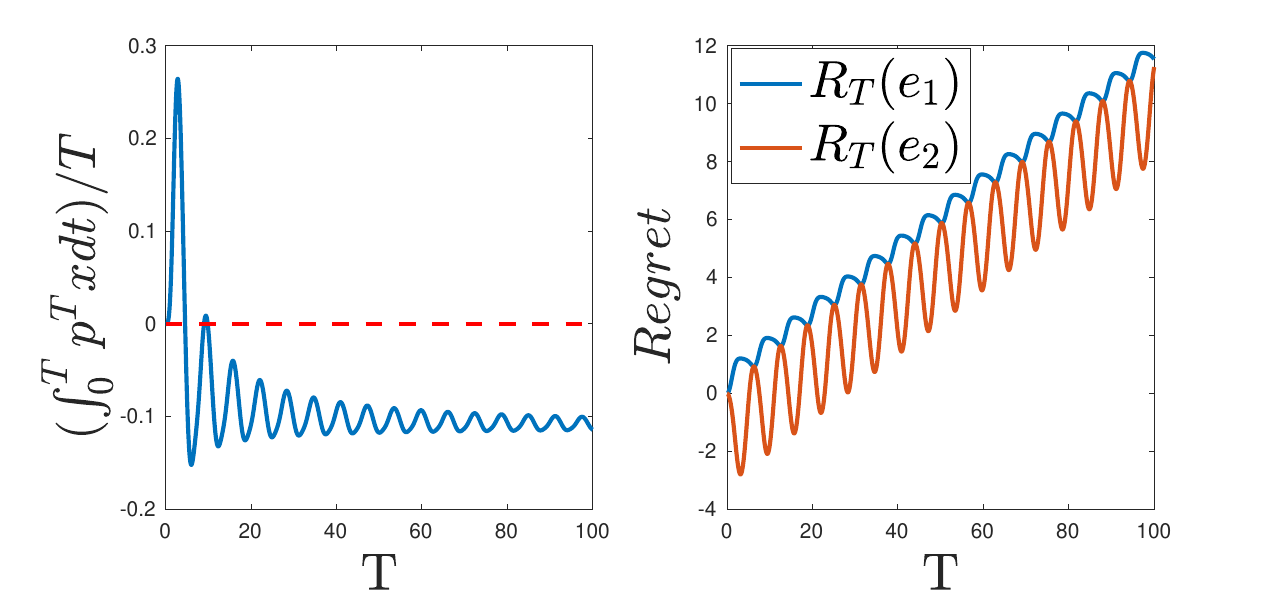}
    \caption{The performance of replicator dynamics with latency  in the environment introduced in Example \ref{ex: RD_latency} }. 
    \label{fig: RD with latency performace}
\end{figure}
\section{Finite Regret and EI-passivity} \label{sec: finite regret and EI-passivity}

EI-passivity  is a powerful property that can be used to establish the convergence  of learning models in various game-theoretic settings. For example, as shown in \cite{gao2020passivity}, exponential replicator dynamics \eqref{eq: EX-RD} is EI-passive, which ensures convergence  in contractive games.  In this section, we establish a connection between Finite regret property and EI-passivity.

Nash Stationarity is another fundamental property of the learning dynamic models that links the equilibria with the set of best responses to a deterministic payoff vector.
\begin{definition}[Nash Stationarity]
The learning dynamics specified by $\dot{x}= \mathcal{V}(x,p)$ satisfies Nash stationary if the following holds
\begin{equation*}
 \mathcal{V}(x^*,p^*)=0 \implies x^* \in \argmax_{z \in \Delta_n}  z^T p^* \; \; \forall p^*\in \mathbb{R}^n
\end{equation*}
\end{definition}

\begin{proposition} \label{prop: replicator EIP}
    The learning dynamic model is EI-passive if it has finite regret and satisfies Nash stationarity. 
\end{proposition}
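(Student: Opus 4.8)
The plan is to show that finite regret, combined with Nash stationarity, forces the existence of a storage function witnessing EI-passivity from $p$ to $x - x^*$ at every equilibrium. Start by fixing an equilibrium $(p^*, x^*)$ of the dynamics $\dot x = \mathcal{V}(x,p)$, so that $\mathcal{V}(x^*, p^*) = 0$. By Nash stationarity, $x^* \in \argmax_{z\in\Delta_n} z^T p^*$, which is equivalent to the variational inequality $(z - x^*)^T p^* \le 0$ for all $z \in \Delta_n$; in particular $(x(t) - x^*)^T p^* \le 0$ along any trajectory, since $x(t) \in \Delta_n$. I would then write the EI-passivity inequality we must establish: for some constant $\gamma$,
\begin{equation*}
\int_0^T (x(t) - x^*)^T\bigl(p(t) - p^*\bigr)\, dt \ge \gamma \qquad \forall\, T > 0,
\end{equation*}
which, using the inequality just noted (so that $-\int_0^T (x - x^*)^T p^*\,dt \ge 0$), is implied by a uniform lower bound $\int_0^T (x(t) - x^*)^T p(t)\, dt \ge \gamma'$. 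But $\int_0^T (x - x^*)^T p\, dt = -R_T(x^*)$, and finite regret gives exactly $R_T(x^*) \le \beta$ for a constant $\beta$ independent of $T$ (and of the payoff trajectory), hence $\int_0^T (x - x^*)^T p\, dt \ge -\beta$. Taking $\gamma = -\beta$ closes the operator-form inequality.

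Next I would translate this into the state-space storage-function form demanded by the definition of EI-passivity, to make the statement fully rigorous in the framework set up earlier. Here I would invoke the standard converse argument underlying Theorem \ref{thrm:no_regret_theorem}: define the available-storage-type function
\begin{equation*}
V_{x^*}(\xi) = \sup_{T \ge 0,\ p(\cdot)}\ \Bigl\{ -\!\int_0^T (x(t) - x^*)^T\bigl(p(t) - p^*\bigr)\, dt \ \Big|\ x(0) \text{ corresponds to } \xi \Bigr\},
\end{equation*}
where $\xi$ denotes the internal state of the learning dynamics. The bound just derived shows $V_{x^*}$ is finite and nonnegative; it satisfies $V_{x^*}(\xi(0)) - V_{x^*}(\xi(T)) \ge -\int_0^T (x - x^*)^T (p - p^*)\,dt$, i.e. the dissipation inequality $\dot V_{x^*} \le (p - p^*)^T(x - x^*)$; and evaluating at the equilibrium state gives $V_{x^*} = 0$ there. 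That establishes EI-passivity from $p$ to $x - x^*$.

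The main obstacle is the last, more delicate point: confirming that $(p^*, x^*)$ being an equilibrium of the \emph{dynamics} really does correspond to an equilibrium $(u^*, Hu^*)$ of the input-output operator in the sense required by the EI-passivity definition, and that the supremum defining $V_{x^*}$ is attained/finite at the equilibrium state with value zero (so $V_{x^*}(x^*) = 0$ as the definition insists). This is where Nash stationarity does the real work — without it, an equilibrium state of $\mathcal{V}$ need not make $(x - x^*)^T p^* \le 0$ hold, and the reduction from the EI-passivity inequality to the regret bound breaks down. I would therefore spend most of the care verifying that step, while the regret-to-lower-bound computation and the storage-function construction are essentially bookkeeping built on Theorem \ref{thrm:no_regret_theorem}.
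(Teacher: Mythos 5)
Your first paragraph is exactly the paper's proof: fix an equilibrium $(p^*,x^*)$, use Nash stationarity to get $(x(t)-x^*)^T p^* \le 0$ so that $\langle p^*, x-x^*\rangle_T \le 0$, and combine with the finite-regret bound $\langle p, x-x^*\rangle_T \ge -\beta$ to conclude $\langle p-p^*, x-x^*\rangle_T \ge -\beta$. The additional available-storage construction is not needed, since the paper's EI-passivity definition for input-output operators is precisely this truncated-inner-product lower bound, so the argument is correct and essentially the same as the paper's.
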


\begin{proof}
Since the model has finite regret, there exists a constant $\beta \in \mathbb{R}$ such that 
\begin{equation*}
\langle p,x-\bar{x}\rangle_T 
      \geq \beta 
\end{equation*}
for all $T>0$ and every $\bar{x}\in \Delta_n$. To prove EI-passivity, we need to show that for every equilibrium $(p^*,x^*)$,  $\langle p-p^*,x-x^* \rangle_T $ is lower bounded for all $T>0$.  Choosing $\bar{x}= x^*$ in the finite regret inequality yields,  
\[
\langle p-p^*,x-x^* \rangle_T = \langle p,x-x^* \rangle_T - \langle p^*, x-x^* \rangle_T 
\]
By Nash stationarity, for any equilibrium $(p^*,x^*)$, we have  ${p^*}^T x(t) \leq {p^*}^T x^* \; \forall t>0$, so that  $\langle p^*, x-x^* \rangle_T \le 0 \; \forall T>0$. Thus, 
\begin{equation*}
     \langle p-p^*,x-x^* \rangle_T \geq \langle p,x-x^* \rangle_T \geq \beta
\end{equation*}
This lower bound establishes EI-passivity.
\end{proof}
In particular, replicator dynamics (RD) \eqref{eq: RD equation} satisfy Nash stationarity \cite{sandholm2009pairwise} and has finite regret, thus it is  EI-passive. Similarly, (DP) dynamics \eqref{eq:DPD} satisfies Nash stationarity \cite{lahkar2008projection} and, by Theorem~\ref{thrm: DP finite regret}, has finite regret; hence, it is EI-passive as well. Moreover, both SHO-FTRL dynamics \eqref{eq: SHO-FTRL} and SHO-DP dynamics \eqref{eq: SHO-DPD} are constructed as negative feedback interconnections between EI-passive learning dynamic models and an EI-passive linear system, which implies that they are EI-passive.
\begin{proposition}
    If a learning dynamic model is EI-passive and every $\bar{x} \in \Delta_n$ is an equilibrium point for the constant payoff vector $\mathbf{1}_n$, then the model has finite regret. 
\end{proposition}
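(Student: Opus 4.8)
The plan is to combine the state-space form of EI-passivity with the elementary fact that any two points of the simplex agree in their $\mathbf{1}_n$-component, and then lift the resulting bound to all of $\Delta_n$ via Proposition 1.

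First I would fix an arbitrary target strategy $\bar{x}\in\Delta_n$. By hypothesis, $(\mathbf{1}_n,\bar{x})$ is an equilibrium of the learning dynamics: the constant payoff $p^*=\mathbf{1}_n$ is held by the constant strategy $x^*=\bar{x}$, with output $y^*=\bar{x}$ (the output of a learning dynamic model is the strategy itself). Invoking EI-passivity at this equilibrium yields a $C^1$ storage function $V_{\bar{x}}:\R^n\to\R_+$ with $V_{\bar{x}}(\bar{x})=0$ and $\dot{V}_{\bar{x}}(x)\le(p-\mathbf{1}_n)^{\!\top}(x-\bar{x})$ along every trajectory. Integrating over $[0,T]$ and using $V_{\bar{x}}\ge 0$ gives
\[
\langle p-\mathbf{1}_n,\,x-\bar{x}\rangle_T \;\ge\; V_{\bar{x}}(x(T))-V_{\bar{x}}(x(0)) \;\ge\; -V_{\bar{x}}(x(0)).
\]

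The key step is a one-liner: since $x(t)\in\Delta_n$ and $\bar{x}\in\Delta_n$, we have $\mathbf{1}_n^{\!\top}x(t)=1=\mathbf{1}_n^{\!\top}\bar{x}$ for all $t$, hence $\langle\mathbf{1}_n,\,x-\bar{x}\rangle_T=0$ and the inequality above collapses to $\langle p,\,x-\bar{x}\rangle_T\ge -V_{\bar{x}}(x(0))$. Therefore $R_T(\bar{x})=-\langle p,\,x-\bar{x}\rangle_T\le V_{\bar{x}}(x(0))$, a constant that depends only on the initial condition and not on $T$ or on the payoff trajectory $p(\cdot)$ — which is exactly finite regret with respect to $\bar{x}$.

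The only real subtlety is that the definition of finite regret asks for a bound uniform over $\bar{x}\in\Delta_n$, and the argument above produces a separate constant $V_{\bar{x}}(x(0))$ for each $\bar{x}$. I would circumvent this by running the argument only at the $n$ vertices $e_1,\dots,e_n$, obtaining $R_T(e_i)\le V_{e_i}(x(0))=:\beta_i$, setting $\beta=\max_i\beta_i$, and then appealing to Proposition 1 to lift this vertex bound to a uniform bound over all of $\Delta_n$. I do not expect any step to present a genuine obstacle; the proof is essentially the EI-passivity inequality, the simplex identity $\mathbf{1}_n^{\!\top}(x-\bar{x})=0$, and Proposition 1 assembled in that order.
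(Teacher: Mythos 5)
Your argument is correct and takes essentially the same route as the paper: both invoke EI-passivity at the equilibria $(\mathbf{1}_n,\bar{x})$ and then use the simplex identity $\mathbf{1}_n^{\top}\bigl(x(t)-\bar{x}\bigr)=0$ to turn the lower bound on $\langle p-\mathbf{1}_n,\,x-\bar{x}\rangle_T$ into a lower bound on $\langle p,\,x-\bar{x}\rangle_T$, which is exactly passivity from $p$ to $x-\bar{x}$ and hence finite regret. The only cosmetic differences are that you work with the state-space storage-function form of EI-passivity instead of the input--output inequality, and you add the vertex argument via the earlier proposition to make the bound uniform in $\bar{x}$ --- a uniformity point the paper leaves implicit.
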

\begin{proof}
    EI-passivity guarantees the existence of a  constant $\beta \in \R$ such that for any equilibrium pair $(p^*,x^*)$ the inequality  $\langle p-p^*,x-x^*\rangle_T \geq \beta$ holds for all $T>0$. In particular, since every $\bar{x} \in \Delta_n$ is an equilibrium when $p^*=\mathbf{1}_n$, we have, 
    \(\langle p-\mathbf{1}_n, x-\bar{x}\rangle_T\geq \beta \). 
Since  $x(t)-\bar{x}$ lies in $T\Delta_n(x(t))$,  $\langle \mathbf{1}_n, x(t)-\bar{x}\rangle = 0 $ for all $t \in \R_+$. Thus, 
\[\langle p-\mathbf{1}_n, x-\bar{x}\rangle_T = \langle p,x-\bar{x}\rangle_T \geq \beta \; \; \forall T>0 \]
and every $\bar{x} \in \Delta_n$. This lower bound implies that the model is passive from $p$ to $x-\bar{x}$ for every $\bar{x} \in \Delta_n$ and therefore has finite regret by theorem \ref{thrm:no_regret_theorem}.
\end{proof}
Exponential replicator dynamics (Ex-RD) \eqref{eq: EX-RD} satisfies the EI-passivity condition. However, for the  payoff vector  \(\mathbf{1}_n\), the unique equilibrium is the  strategy \(\frac{1}{n}\mathbf{1}_n\). As a result,  (Ex-RD) does not satisfy the condition that every \(\bar{x}\in\Delta_n\) can be an equilibrium for the constant payoff vecotor $\mathbf{1}_n$, and therefore it does not have finite regret. 

\begin{proposition} \label{prop: not finite regret not EI-passive}
If a learning dynamic model does not have finite regret and every \(\bar{x}\in\Delta_n\) is an equilibrium for the constant payoff vector \(\mathbf{1}_n\), then the model is not EI-passive.
\end{proposition}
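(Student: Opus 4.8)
The plan is to note that this proposition is exactly the contrapositive of the immediately preceding one, so the proof is a short argument by contradiction. Concretely, I would suppose toward a contradiction that the learning dynamic model \emph{is} EI-passive. By hypothesis, every $\bar{x}\in\Delta_n$ is an equilibrium for the constant payoff vector $\mathbf{1}_n$, so the hypotheses of the preceding proposition are met in full; applying it verbatim yields that the model has finite regret, which directly contradicts the standing assumption that it does not. Hence the model cannot be EI-passive.

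For a self-contained version that does not merely invoke the previous result, I would inline its short proof. EI-passivity supplies, for each admissible equilibrium pair, a constant lower bound $\beta\in\R$ with $\langle p-p^*,\,x-x^*\rangle_T\ge\beta$ for all $T>0$; specializing to the equilibria $(p^*,x^*)=(\mathbf{1}_n,\bar{x})$ for an arbitrary $\bar{x}\in\Delta_n$ and using $\mathbf{1}_n^\top\bigl(x(t)-\bar{x}\bigr)=0$, since $x(t)-\bar{x}\in T\Delta_n$, collapses this to $\langle p,\,x-\bar{x}\rangle_T\ge\beta$ for every payoff trajectory and every $\bar{x}$. This is precisely the finite-regret inequality (with regret bounded above by $-\beta$), contradicting the assumed failure of finite regret.

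Since the argument is essentially immediate, there is no substantial obstacle; the only point requiring a moment's care is that the lower bound extracted from EI-passivity must hold uniformly over the family $\{(\mathbf{1}_n,\bar{x}):\bar{x}\in\Delta_n\}$ of equilibria used to certify finite regret. This uniformity is already secured inside the preceding proposition (via compactness of $\Delta_n$ together with the form of the EI-passivity storage functions), so in the write-up I would simply cite that proposition and keep the proof to two or three lines.
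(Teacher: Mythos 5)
Your proposal is correct and is essentially the paper's argument: the paper proves the statement directly by taking a strategy $\bar{x}$ with unbounded regret, using $\mathbf{1}_n^\top\bigl(x(t)-\bar{x}\bigr)=0$ to conclude that the EI-passivity inequality fails at the equilibrium $(\mathbf{1}_n,\bar{x})$, which is exactly your inlined computation read in the contrapositive direction. Framing it as a contradiction via the preceding proposition is an equivalent repackaging, so no further comment is needed.
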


\begin{proof}
Since the model does not have finite regret, there exists a strategy \(\bar{x}\in\Delta_n\) such that
\[
\limsup_{T\to\infty} -\langle p,x-\bar{x}\rangle_T= \infty.
\]
Given that every \(\bar{x}\in\Delta_n\) is an equilibrium for \(p^*=\mathbf{1}_n\), consider the equilibrium \((p^*,x^*)=(\mathbf{1}_n,\bar{x})\). Since \(x(t)-\bar{x}\) lies in $T\Delta_n(x(t))$, 
\(
\mathbf{1}_n^\top\bigl(x(t)-\bar{x}\bigr)=0.
\)
It follows that
\[
\limsup_{T\to\infty} -\langle p-\mathbf{1}_n,x-\bar{x}\rangle_T = \limsup_{T\to\infty} -\langle p,x-\bar{x}\rangle_T =\infty.
\]
which violates the EI-passivity condition. Therefore, the model is not EI-passive.
\end{proof}
Example \ref{ex: no regret counter example} shows that BNN dynamics \eqref{eq: BNN dynamics}, Smith dynamics \eqref{eq: Smith dynamics}, Logit dynamics \eqref{eq: logit dynamics}, and (TP) dynamics \eqref{eq: target projection dynamics} do not finite regret. Moreover, For each of these  models, every $\bar{x} \in \Delta_n$ is an equilibrium point for the constant payoff vector $\mathbf{1}_n$. Consequently, Proposition \ref{prop: not finite regret not EI-passive} implies that none of these learning models is EI-passive.

\section{Passivity-Based classification} \label{sec: Passivity-based classification}
In this section, we demonstrate that incremental passivity is a generalized passivity notion that implies both \(\delta\)-passivity and EI-passivity. Accordingly, we introduce a classification of learning dynamic models based on the passivity notion they satisfy (i.e., EI-passivity, \(\delta\)-passivity, and incremental passivity). In Section \ref{input-output operator passivity}, we showed that incremental passivity implies EI-passivity in both input–output operators and state-space formulations. The following propositions establish that incremental passivity further implies \(\delta\)-passivity.
\begin{proposition} \label{prop: incremental passivity to delta passivity input -out operator} 
     An input-output operator $H$ is  $\delta$-passive if it incrementally passive. 
\end{proposition}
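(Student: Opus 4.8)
The plan is to exploit the definition of $\delta$-passivity for input--output operators directly: $H$ is $\delta$-passive if there exists a constant $\beta \in \mathbb{R}$ such that $\langle \dot{(Hu)}, \dot{u}\rangle_T \ge \beta$ for all $u \in \mathbb{U}$ and $T \in \mathbb{R}_+$. The key idea is a time-shift (or difference-quotient) trick on the incremental passivity inequality. Given $u \in \mathbb{U}$, fix a small $\tau > 0$ and apply incremental passivity to the pair $u(\cdot)$ and $\tilde{u}(\cdot) := u(\cdot + \tau)$ (the $\tau$-shifted input), which is again an admissible input in $\mathbb{U}$. Incremental passivity gives

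\begin{equation*}
\langle Hu - H\tilde{u}, \, u - \tilde{u} \rangle_T \ge 0 \qquad \forall T \in \mathbb{R}_+ .
\end{equation*}

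\emph{Key steps, in order.} First I would divide the above inequality by $\tau^2 > 0$, writing it as $\langle \frac{Hu - H\tilde{u}}{\tau}, \frac{u - \tilde{u}}{\tau}\rangle_T \ge 0$. Second, I would identify $\frac{u(t) - u(t+\tau)}{-\tau}$ as the forward difference quotient converging to $\dot{u}(t)$ as $\tau \to 0^+$, and similarly $\frac{Hu(t) - H\tilde{u}(t)}{-\tau} \to \dot{(Hu)}(t)$, provided the operator $H$ is such that $H\tilde u$, the response to the shifted input, equals the shift of the response $Hu$ — i.e. $H$ is time-invariant, which is the standing assumption for these learning operators. Third, taking the limit $\tau \to 0^+$ and invoking continuity of the truncated inner product together with the regularity of the signals (the paper assumes $p$, and hence the relevant trajectories, are continuously differentiable, so the difference quotients converge uniformly on $[0,T]$), the inequality passes to the limit and yields $\langle \dot{(Hu)}, \dot{u}\rangle_T \ge 0$ for all $u, T$. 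This is $\delta$-passivity with $\beta = 0$.

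\emph{Main obstacle.} The delicate point is justifying the interchange of the limit $\tau \to 0^+$ with the integral defining $\langle \cdot, \cdot \rangle_T$, and ensuring the shifted input $\tilde u = u(\cdot + \tau)$ is genuinely in the admissible input set $\mathbb{U}$ and that $H(\tilde u) $ is the time-shift of $H(u)$. For the learning dynamics considered here this is automatic: the dynamics are time-invariant, the input stream $p$ is $C^1$ with the output trajectory inheriting $C^1$ regularity on every $[0,T]$, so the difference quotients converge uniformly on compact time intervals, and uniform convergence of the integrands on $[0,T]$ legitimizes passing the limit through $\int_0^T$. An alternative route that sidesteps the shift entirely is to note that $\delta$-passivity is essentially incremental passivity applied to the ``derivative system'': differentiating the state-space dynamics and treating $(\dot{u}, \dot{x}, \dot{y})$ as inputs/states/outputs of the linearized-along-trajectories system, one reads off the $\delta$-storage function from the incremental storage function. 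I would present the time-shift argument as the main proof since it works at the operator level without invoking a state-space model, and mention the derivative-system viewpoint as a remark.
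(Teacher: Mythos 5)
Your proposal is correct and follows essentially the same route as the paper: the paper also applies incremental passivity to a small time shift of the trajectory (using a backward shift and Taylor expansion where you use a forward shift and difference quotients), divides by $\tau^2$, and passes to the limit to obtain $\langle \dot{u},\dot{(Hu)}\rangle_T \ge 0$, i.e.\ $\delta$-passivity with $\beta=0$. Your extra care about time-invariance, admissibility of the shifted input, and interchanging the limit with the integral only makes explicit what the paper leaves implicit.
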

\begin{proof}
    Let  $(u,y)$ are the input and the output of  $H$. Take $(\tilde{u},\tilde{y})$ as the time shifted version of $(u,y)$ by small $\tau>0$. by time-invariance of $H$, $\tilde{u}=u(t-\tau)$ and $\tilde{y}=y(t-\tau)$. The Taylor series expansions of $\tilde{u}$ and $\tilde{y}$ are 
\begin{equation}
    \begin{gathered}
        u(t-\tau) = u(t)- \tau \dot{u} + O(\tau^2)\\
        y(t-\tau) = y(t)- \tau \dot{y} + O(\tau^2) 
    \end{gathered}
    \label{eq: shifted trajectroy}
\end{equation}
From incremental passivity, we have for all $T \in \mathbb{R}_+$ 
\begin{equation*}
    \langle u-\tilde{u},y-\tilde{y}\rangle_T \geq 0 
\end{equation*}
substitute by \eqref{eq: shifted trajectroy} yields
\begin{equation*}
    \int_0^T (\tau \dot{u} + O(\tau^2))^T (\tau \dot{y}+ O(\tau^2))dt = \int_0^T \tau^2 \dot{u}^T \dot{y} + O(\tau^3) dt\geq 0 
\end{equation*}
Divide by $\tau^2$ and take the limit as $\tau \to 0$, we get 
\begin{equation*}
    \int_0^T \dot{u}^T \dot{y} dt =\langle \dot{u},\dot{y}\rangle_T\geq0 
\end{equation*}
which is the condition for \(\delta\)-passivity with a lower bound $\beta=0$
\end{proof}
\begin{proposition}\label{prop: incremntal passivity to delta passivity state space model}
    The state space model \eqref{eq: state space model} is $\delta$-passive if it is incrementally passive. 
\end{proposition}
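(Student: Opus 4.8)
The plan is to mimic, in the state-space setting, the infinitesimal-shift argument used in Proposition~\ref{prop: incremental passivity to delta passivity input -out operator}, but now at the level of storage functions rather than truncated inner products. Incremental passivity of \eqref{eq: state space model} gives, for any two trajectories $(u,x,y)$ and $(\tilde u,\tilde x,\tilde y)$, a $C^1$ storage function $V_\Delta(x,\tilde x)\ge 0$ with $\dot V_\Delta(x,\tilde x)\le (u-\tilde u)^\top(y-\tilde y)$. First I would fix an arbitrary trajectory $(u(\cdot),x(\cdot),y(\cdot))$ of the system and, for small $\tau>0$, let $(\tilde u_\tau,\tilde x_\tau,\tilde y_\tau)$ be the time-shifted trajectory $u(t-\tau),x(t-\tau),y(t-\tau)$, which is again a valid trajectory because \eqref{eq: state space model} is time-invariant (the vector field $f$ and output map $g$ have no explicit time dependence). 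Applying incremental passivity to this pair yields $\dot V_\Delta\bigl(x(t),x(t-\tau)\bigr)\le \bigl(u(t)-u(t-\tau)\bigr)^\top\bigl(y(t)-y(t-\tau)\bigr)$ for every $t$.

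Next I would introduce the candidate $\delta$-storage function. Using the Taylor expansions $x(t-\tau)=x(t)-\tau\dot x(t)+O(\tau^2)$ and similarly for $u,y$, the right-hand side above is $\tau^2\,\dot u(t)^\top\dot y(t)+O(\tau^3)$. For the left-hand side, expand $V_\Delta(x(t),x(t-\tau))$ about the diagonal: since $V_\Delta(\cdot,\cdot)\ge 0$ and $V_\Delta(x,x)$ attains the minimum value (being nonnegative and equal to the storage evaluated on two identical trajectories, which one expects to be $0$, or at least constant), the first-order term in $\tau$ vanishes and $V_\Delta(x(t),x(t-\tau))=\tfrac{\tau^2}{2}\dot x(t)^\top M(x(t))\,\dot x(t)+O(\tau^3)$ where $M(x)=\partial^2_{\tilde x\tilde x}V_\Delta(x,\tilde x)\big|_{\tilde x=x}$ is the (positive semidefinite) partial Hessian. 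This identifies the natural $\delta$-storage function $V_\delta(x,\dot x):=\tfrac12\dot x^\top M(x)\dot x\ge 0$. Dividing the differential inequality by $\tau^2$ and letting $\tau\to 0$ then gives $\dot V_\delta(x,\dot x)\le \dot u^\top\dot y$, which is exactly the state-space $\delta$-passivity condition.

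The main obstacle is the regularity/justification of the limiting step: one needs $V_\Delta$ to be $C^2$ (not merely $C^1$) near the diagonal for the Hessian $M(x)$ to exist and for the expansion of $\dot V_\Delta(x(t),x(t-\tau))$ in powers of $\tau$ to be legitimate, and one needs the trajectories to be smooth enough ($x\in C^2$, $u\in C^1$) for the Taylor expansions of the shifted signals; the paper's standing assumptions on $\mathcal V$ and on $p$ being continuously differentiable supply the latter. A cleaner alternative, which I would actually present to sidestep the Hessian bookkeeping, is to avoid constructing $V_\delta$ explicitly: simply note that for the shifted pair the incremental-passivity inequality integrates to $V_\Delta(x(T),x(T-\tau))-V_\Delta(x(0),x(-\tau))\le \langle u-\tilde u_\tau,\,y-\tilde y_\tau\rangle_T$, divide by $\tau^2$, pass to the limit using $\tfrac{1}{\tau^2}V_\Delta(x(t),x(t-\tau))\to V_\delta(x(t),\dot x(t))$ with $V_\delta\ge 0$, and read off $V_\delta(x(T),\dot x(T))-V_\delta(x(0),\dot x(0))\le\langle\dot u,\dot y\rangle_T$, i.e. the integral form of $\delta$-passivity with lower bound $\beta=-V_\delta(x(0),\dot x(0))$ absorbed into the nonnegative storage. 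Either way, the substantive content is just: time-invariance makes shifts admissible, and $\delta$-passivity is the infinitesimal-shift limit of incremental passivity.
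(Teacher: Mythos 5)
Your proposal follows essentially the same route as the paper's proof: exploit time-invariance to compare a trajectory with its $\tau$-shift, Taylor-expand both sides of the incremental dissipation inequality, identify $V_\delta(x,\dot x)=\tfrac12\dot x^\top \nabla^2_{\tilde x}V_\Delta(x,x)\dot x$ as the $\tau^{-2}$-scaled limit of $V_\Delta(x(t),x(t-\tau))$, and pass to the limit to obtain $\dot V_\delta\le\dot u^\top\dot y$. Your added remarks on the needed $C^2$ regularity of $V_\Delta$ near the diagonal (the paper's definition only demands $C^1$) and on $V_\Delta$ vanishing on the diagonal are fair caveats about hypotheses the paper uses implicitly, but they do not change the argument.
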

\begin{proof}
Let \(V_\Delta(x,\tilde{x})\) be an incremental storage function such that for any two trajectories \((u,x,y)\) and \((\tilde{u},\tilde{x},\tilde{y})\),
\begin{equation*}
    \dot{V}_\Delta(x(t),\tilde{x}(t)) \leq (u(t)-\tilde{u}(t))^T (y(t)-\tilde{y}(t)) 
\end{equation*}
    Let  $(\tilde{u},\tilde{x},\tilde{y})$ is a time shifted trajectory of  $(u,x,y)$ by small $\tau>0$. By time invariance,  $(\tilde{u},\tilde{x},\tilde{y})=(u(t-\tau),x(t-\tau),y(t-\tau))$. Hence,   
\begin{equation*}
\begin{aligned}
    \dot{V}_{\Delta}(x(t),x(t-\tau)) &\leq (u(t)-u(t-\tau))^T(y(t)-y(t-\tau)) \\ 
    &=\tau^2 \dot{u}^T\dot{y} + O(\tau^3)
\end{aligned}
\end{equation*}
Dividing both sides by $\tau^2$ and taking the limit as $\tau \to 0$ yields, 
\begin{equation*}
    \lim_{\tau \to 0} \frac{1}{\tau^2}\dot{V}_{\Delta}(x(t),x(t-\tau)) \leq \dot{u}^T \dot{y}
\end{equation*}
Therefore, the $\delta$-passivity storage function is 
\begin{equation*}
    V_\delta(x(t),\dot{x}(t)) = \lim_{\tau \to 0} \frac{1}{\tau^2} V_{\Delta}(x(t),x(t-\tau)).
\end{equation*}

To show that $V_{\delta}(x(t),\dot{x}(t))$ is positive semi-definite function, we write the Taylor series expansion of $V_{\Delta}(x,y)=V_{\Delta}(x(t),x(t-\tau))=V_{\Delta}(x(t),x(t)+\Delta x_2)$ around the second argument $y$ where $\Delta x_2=-\tau \dot{x}+O(\tau^2)$, we get 
\begin{equation*}
\begin{aligned}
        V_{\Delta}(x(t),x(t-\tau))
        &= V_{\Delta}(x(t),x(t))+\nabla_{y} V_{\Delta}(x(t),x(t)) \Delta x_2\\
        &+ \frac{1}{2} \Delta x_2^T \nabla^2_{y} V_{\Delta}(x(t),x(t))\Delta x_2+ H.O.T 
\end{aligned}
\end{equation*}
The first term  $V_{\Delta}(x(t),x(t))=0$ and since $V_{\Delta}$ is continuously differentiable positive semi-definite and has a local minimum at $V_{\Delta}(x(t),x(t))$, then $\nabla_{y}V_{\Delta}(x(t),x(t))=0$.   Therefore, 
\begin{equation*}
\begin{aligned}
     V_{\Delta}(x(t),x(t-\tau))
     &=\frac{1}{2} \tau^2 \dot{x}(t) \nabla^2_{y} V_{\Delta}(x(t),x(t)) \dot{x}(t) + O(\tau^3)
\end{aligned}
\end{equation*}
By dividing by $\tau^2$ and taking the limit as $\tau \to 0$, we obtain 
\begin{equation*}
     V_\delta(x(t),\dot{x}(t)) = \frac{1}{2} \dot{x}(t) \nabla^2_{y} V_{\Delta}(x(t),x(t)) \dot{x}(t) \geq0
\end{equation*}
since the Hessian \(\nabla^2_{y} V_\Delta(x,y)\) is positive semi-definite. This completes the proof.
 
\end{proof}
Prior work has shown that BNN dynamics \eqref{eq: BNN dynamics} and Smith dynamics \eqref{eq: Smith dynamics} are \(\delta\)-passive \cite{fox2013population}. Additionally,   \cite{park2018passivity} established  that Logit dynamics \eqref{eq: logit dynamics} is  \(\delta\)-passive, while  replicator dynamics (RD) \eqref{eq: RD equation} is not $\delta$-passive. In this work, we extend this analysis by showing that target projection (TP) dynamics \eqref{eq: target projection dynamics} is $\delta$-passive. 
\begin{proposition}
    Target projection (TP) dynamics \eqref{eq: target projection dynamics} is $\delta$-passive. 
\end{proposition}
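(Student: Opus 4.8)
The plan is to exhibit an explicit $\delta$-storage function for the (TP) dynamics $\dot{x} = -x + \Pi_{\Delta}(x+p)$ and verify that its derivative along trajectories is bounded above by $\dot{p}^\top \dot{x}$. A natural candidate, mirroring the replicator/projection computations already used in the paper, is $V_\delta(x,\dot{x}) = \tfrac{1}{2}\|\dot{x}\|_2^2$. To differentiate $\dot{x}$ I would first write $y(t) := \Pi_{\Delta}(x(t)+p(t))$, so that $\dot{x} = y - x$ and hence $\ddot{x} = \dot{y} - \dot{x}$. The key technical input is the (well-known) fact that the Euclidean projection onto the convex set $\Delta$ is firmly nonexpansive and, more usefully here, that the map $t \mapsto \Pi_\Delta(w(t))$ is differentiable almost everywhere with derivative $\dot{y} = P(t)\,\dot{w}$, where $w = x+p$ and $P(t)$ is the orthogonal projection onto the face (affine hull of the active constraints) of $\Delta$ at $y(t)$; in particular $P(t)$ is symmetric, idempotent, and satisfies $0 \preceq P(t) \preceq I$.

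With that in hand the computation is: $\dot{V}_\delta = \dot{x}^\top \ddot{x} = \dot{x}^\top(\dot{y} - \dot{x}) = \dot{x}^\top \dot{y} - \|\dot{x}\|_2^2$. Now substitute $\dot{y} = P(\dot{x} + \dot{p})$ and use $\dot{x} = y - x$. Here is the point where I want to use a complementarity/variational-inequality identity: since $y = \Pi_\Delta(x+p)$, the vector $(x+p) - y$ lies in the normal cone $N\Delta(y)$, and one shows $\dot{x} = y - x$ decomposes relative to the active face so that $P\dot{x} = \dot{x}$ (the velocity of the state is tangent to the current face, because $x$ tracks $y$ which moves within that face). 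Granting $P\dot{x} = \dot{x}$, we get $\dot{x}^\top \dot{y} = \dot{x}^\top P(\dot{x}+\dot{p}) = (P\dot{x})^\top(\dot{x}+\dot{p}) = \dot{x}^\top\dot{x} + \dot{x}^\top \dot{p}$, hence $\dot{V}_\delta = \|\dot{x}\|_2^2 + \dot{x}^\top\dot{p} - \|\dot{x}\|_2^2 = \dot{p}^\top\dot{x}$, which is exactly the $\delta$-passivity inequality (in fact with equality, i.e. $\delta$-lossless, up to the measure-zero set of nonsmooth times). Since $V_\delta \geq 0$, this establishes $\delta$-passivity with $\beta = -V_\delta(x(0),\dot{x}(0))$.

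I would organize the write-up as: (i) set $y=\Pi_\Delta(x+p)$, record $\dot{x}=y-x$; (ii) state the differentiability-of-projection lemma giving $\dot{y}=P\dot{x}+P\dot{p}$ with $P$ the symmetric idempotent face-projection; (iii) prove the tangency identity $P\dot{x}=\dot{x}$ using that $\dot{x}=y-x$ has the same "active coordinates" structure as the motion of $y$ along its face — concretely, on any interval where the active set of $y$ is constant, $\dot{x}$ lies in the linear span of that face; (iv) assemble the chain of equalities above to conclude $\dot{V}_\delta \le \dot{p}^\top\dot{x}$ with $V_\delta=\tfrac12\|\dot x\|_2^2\ge 0$.

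The main obstacle is step (iii), the tangency claim $P\dot{x}=\dot{x}$, together with the nonsmoothness: the projection map is only piecewise smooth, so the active face of $\Delta$ at $y(t)$ changes at isolated times and $V_\delta$ is merely absolutely continuous rather than $C^1$. I expect to handle this by working on the (open, full-measure) set of times where the active set is locally constant, showing the inequality there, and invoking continuity of $V_\delta$ plus the fact that it cannot jump upward across the exceptional times (the projection is Lipschitz, so $\dot x$ is bounded and $V_\delta$ is Lipschitz, hence the integral inequality $V_\delta(x(T),\dot x(T))-V_\delta(x(0),\dot x(0))\le \int_0^T \dot p^\top\dot x\,dt$ survives). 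An alternative that sidesteps some of this is to recall that (TP) dynamics is itself a projection-type gradient flow and to quote the known result that projected dynamical systems of this form are $\delta$-passive from $p$ to $x$; but giving the self-contained storage-function argument above is cleaner and matches the style of the (DP) proof earlier in the paper.
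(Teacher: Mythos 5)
Your plan has a genuine gap, and it is not merely the nonsmoothness you flag at the end: the tangency identity (iii), $P\dot{x}=\dot{x}$, is simply false, even on intervals where the active face of $\Pi_\Delta(x+p)$ is constant and the projection is perfectly smooth. The velocity $\dot{x}=y-x$ lies in the tangent space of the face of $\Delta_n$ at $y$ only if $x$ itself lies in the affine hull of that face, which typically fails. Concretely, take $n=2$, $x$ in the interior and $p$ large in the first coordinate so that $y=\Pi_\Delta(x+p)=e_1$ on a whole interval; then the face is the vertex $\{e_1\}$, $P=0$ and $\dot{y}=0$, while $\dot{x}=e_1-x\neq 0$. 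Worse, your candidate storage function $V_\delta=\tfrac12\|\dot{x}\|_2^2$ is not a valid $\delta$-storage function for (TP) at all: with $x(0)=(1/2,1/2)^T$ and $p(t)=(2,0)^T+t(-10,10)^T$, one checks that $\Pi_\Delta(x+p)\equiv e_1$ for $t\in[0,0.05]$, so $\dot{x}(t)=e^{-t}(1/2,-1/2)^T$ and
\begin{equation*}
\frac{d}{dt}\,\tfrac12\|\dot{x}\|_2^2=-\|\dot{x}\|_2^2\approx-\tfrac12,\qquad \dot{p}^T\dot{x}\approx-10,
\end{equation*}
so $\dot{V}_\delta\le\dot{p}^T\dot{x}$ fails pointwise, and integrating over $[0,0.05]$ gives $V_\delta(T)-V_\delta(0)\approx-0.024$ versus $\int_0^T\dot{p}^T\dot{x}\,dt\approx-0.49$, so the integral dissipation inequality fails too. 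Hence the problem cannot be repaired by restricting to times where the active set is locally constant, and the claimed losslessness is also untenable (with constant $p$ and the projection pinned at a vertex, $\|\dot{x}\|_2^2$ strictly decays). The proposition itself is of course still true; it is your storage function and the key identity that break.

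The paper's proof avoids differentiating the projection map entirely. It takes the value function $V(x,p)=\max_{y\in\Delta_n}\bigl\{(y-x)^T p-\tfrac12\|y-x\|_2^2\bigr\}$, whose maximizer is exactly $y^*=\Pi_\Delta(x+p)$, so $V=\dot{x}^T p-\tfrac12\|\dot{x}\|_2^2$ (note: not your kinetic energy). It is nonnegative (take $y=x$ in the max) and $C^1$ by the envelope theorem, with $\nabla_p V=\Pi_\Delta(x+p)-x=\dot{x}$ and $\nabla_x V=-p+\Pi_\Delta(x+p)-x$, which gives
\begin{equation*}
\dot{V}=\bigl((x+p)-\Pi_\Delta(x+p)\bigr)^T\bigl(x-\Pi_\Delta(x+p)\bigr)+\dot{x}^T\dot{p}\le\dot{x}^T\dot{p},
\end{equation*}
the inequality following from the variational characterization of the projection (the residual $(x+p)-\Pi_\Delta(x+p)$ lies in the normal cone at $\Pi_\Delta(x+p)$ and $x\in\Delta_n$). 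If you want to rescue your route, the extra term you are missing is precisely a penalty on the normal component of $y-x$, and adding it essentially reconstructs this Moreau-envelope-type storage function; your fallback of citing a $\delta$-passivity result for projected dynamical systems does not apply either, since that is the (DP) model, not (TP).
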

\begin{proof}
    To establish \(\delta\)-passivity for a learning model of the form \(\dot{x}=\mathcal{V}(x,p)\), we seek a positive semi-definite storage function \(V(x,p)\) such that
\(
\dot{V}(x,p) \le \dot{x}^\top \dot{p}.
\)
For (TP) dynamics, define
\[
V(x,p)=\max_{y\in\Delta_n}(y-x)^\top p - \frac{1}{2}\|y-x\|_2^2.
\]
      and Let 
    \(
         y^*(x,p) = \argmax_{y\in \Delta_n} (y-x)^T p -\frac{1}{2} \|y-x\|_2^2
    \), then, 
     \begin{align*}
        y^*(x,p)&=\argmin_{y \in \Delta_n} \|y-x\|_2^2 + \|p\|_2^2 -2 (y-x)^T p \\
        &= \argmin_{y\in \Delta_n} \|y- (x+p)\|_2^2  = \Pi_\Delta(x+p)
    \end{align*}
Thus, we can write
\begin{equation*}
    V(x,p) = (y^*(x,p)-x)^T p- \frac{1}{2} \|y^*(x,p)-x\|_2^2 
\end{equation*}
By  envelop theorem, the gradient of $V(x,p)$ w.r.t. $x$ and $p$ are 
\begin{align*}
    \nabla_x V(x,p)&= -p + \Pi_\Delta(x+p)-x \\
    \nabla_p V(x,p) & = \Pi_\Delta(x+p)-x = \dot{x} 
\end{align*}
Hence, 
\begin{align*}
    \dot{V}(x,p)& = \nabla_x V(x,p) \dot{x} + \nabla_p V(x,p)\dot{p} \\ 
    &= ((x+p)-\Pi_\Delta(x+p))^T (x-\Pi_\Delta(x+p)) + \dot{x}^T \dot{p}\\
    &\leq \dot{x}^T \dot{p} 
\end{align*}
where the final inequality follows from the properties of the projection operator. This confirms that TP dynamics are \(\delta\)-passive.

\end{proof}
In Section \ref{sec: finite regret and EI-passivity}, we showed that replicator dynamics (RD) \eqref{eq: RD equation}, DP dynamics \eqref{eq:DPD}, (SHO-FTRL) dynamics \eqref{eq: SHO-FTRL}, and (SHO-DP) dynamics \eqref{eq: SHO-DPD} have finite regret and are EI-passive, whereas BNN dynamics \eqref{eq: BNN dynamics}, Smith dynamics \eqref{eq: Smith dynamics}, Logit dynamics \eqref{eq: logit dynamics}, and (TP) dynamics \eqref{eq: target projection dynamics} fail to achieve finite regret and are not EI-passive. Moreover, while exponential replicator dynamics is EI-passive, it does not have finite regret. Together with the \(\delta\)-passivity results, these findings form the basis of our passivity-based classification, as illustrated in Figure \ref{fig: passivity based classification}.

\begin{figure}[H]
    \centering
    \includegraphics[width=0.8\linewidth]{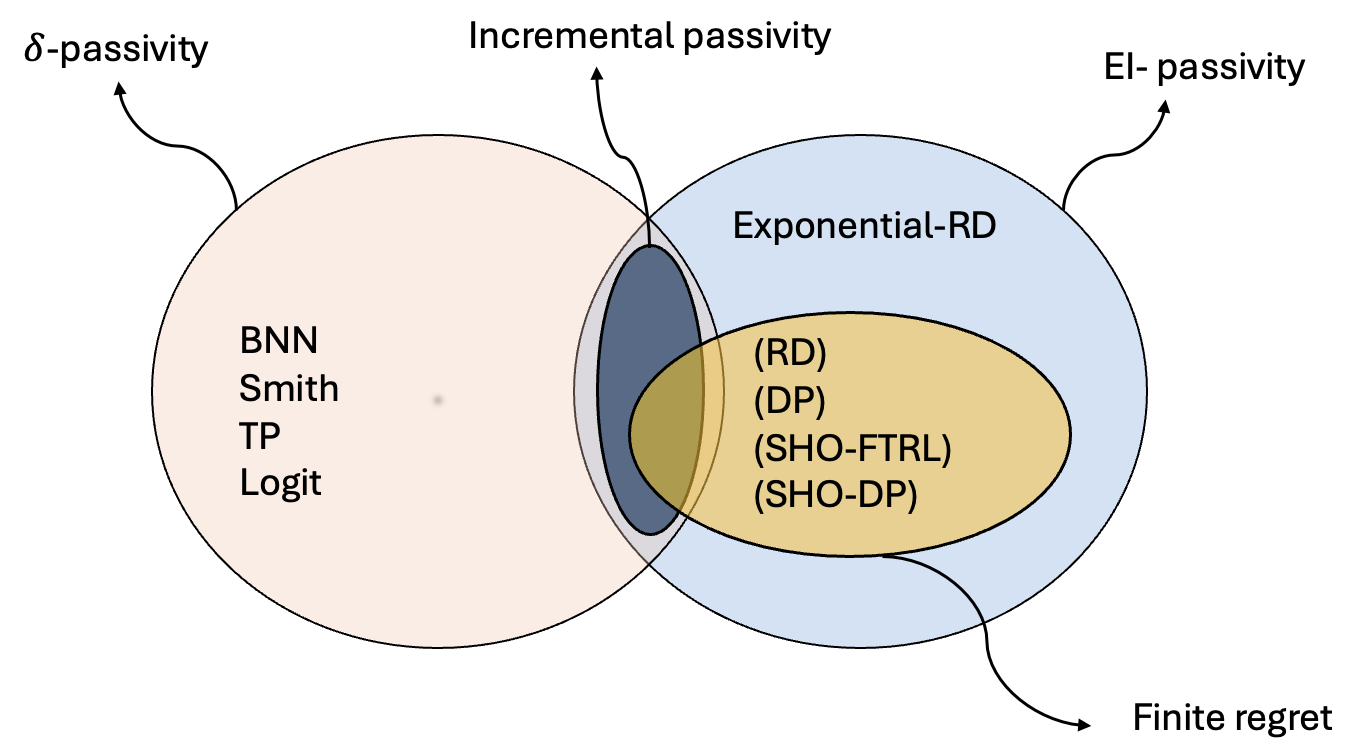}
    \caption{Passivity-based classification of learning dynamic models}
    \label{fig: passivity based classification}
\end{figure}

\section{Convergence in Contractive games} \label{sec: Convergence in contracive games}
The implementation of a learning dynamic model in a game can be viewed as a feedback interconnection between the dynamic system (i.e., the learning model) and the game itself, which may be static or dynamic (see Figure~\ref{fig:connecting LDM with a game}). In this work, we focus on static contractive games where the payoff is given by \(p=F(x)\).
\begin{figure}[H]
    \centering
    \includegraphics[width=0.5\linewidth]{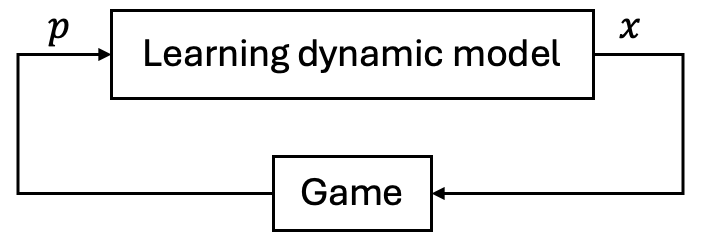}
    \caption{Feedback interconnection between a learning dynamic model and a game}
    \label{fig:connecting LDM with a game}
\end{figure}
 
We now show that if a learning dynamic model has finite regret, then it converges globally in strictly contractive games to the unique Nash equilibrium.

\begin{theorem}
    If the learning dynamic model has finite regret, then it converges globally in strictly contractive games to the unique Nash equilibrium. 
\end{theorem}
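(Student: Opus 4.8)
\emph{Proof proposal.} The plan is to evaluate the finite–regret bound at the unique Nash equilibrium, combine it with the defining inequalities of Nash equilibria and of contractive games to produce a nonnegative, integrable ``gap'' function along the trajectory, and then use a Barbalat-type argument together with compactness of $\Delta_n$ to upgrade integrability to convergence.

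\emph{Step 1 (Reduction to a regret bound at $x^*$).} Since $F$ is strictly contractive, $\mathbb{NE}(F)=\{x^*\}$ is a singleton. Closing the loop with the game gives $p(t)=F(x(t))$, and by the standing assumptions the solution $x(t)$ exists on $[0,\infty)$ with $x(t)\in\Delta_n$. Finite regret, applied with $\bar x=x^*$, yields a constant $\beta$ \emph{independent of $T$} with
\begin{equation*}
\int_0^T F(x(t))^\top\!\bigl(x^*-x(t)\bigr)\,dt = R_T(x^*)\le \beta \qquad \forall\, T>0 .
\end{equation*}
Define the gap function $g(t):=F(x(t))^\top\!\bigl(x^*-x(t)\bigr)$.

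\emph{Step 2 (The gap function is nonnegative).} Decompose
\begin{equation*}
g(t)=\bigl(x^*-x(t)\bigr)^\top F(x^*)\;-\;\bigl(x(t)-x^*\bigr)^\top\!\bigl(F(x(t))-F(x^*)\bigr).
\end{equation*}
The first term is $\ge 0$ because $x^*$ is a Nash equilibrium ($z^\top F(x^*)\le x^{*\top}F(x^*)$ with $z=x(t)$), and the second term is $\ge 0$ because $F$ is contractive. Hence $g(t)\ge 0$ for all $t$, so $\int_0^T g(t)\,dt$ is nondecreasing and bounded above by $\beta$, and therefore converges as $T\to\infty$; in particular $\int_0^\infty g(t)\,dt\le\beta<\infty$.

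\emph{Step 3 (Integrability $\Rightarrow$ convergence).} Because $\mathcal{V}$ and $F$ are continuous and $\Delta_n$ is compact, $\dot x(t)=\mathcal{V}(x(t),F(x(t)))$ is uniformly bounded, so $x(\cdot)$ is Lipschitz; since $y\mapsto F(y)^\top(x^*-y)$ is uniformly continuous on the compact set $\Delta_n$, $g(\cdot)$ is uniformly continuous. By Barbalat's lemma, a uniformly continuous function whose integral converges must tend to zero, so $g(t)\to 0$. Suppose, for contradiction, that $x(t)\not\to x^*$; then there is $t_k\to\infty$ with $x(t_k)\to\bar x$ for some $\bar x\in\Delta_n$, $\bar x\ne x^*$. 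Passing to the limit and using the decomposition of Step 2 together with \emph{strict} contractivity gives $g(t_k)\to F(\bar x)^\top(x^*-\bar x)\ge -(\bar x-x^*)^\top(F(\bar x)-F(x^*))>0$, contradicting $g(t)\to 0$. Hence $x(t)\to x^*$, and since the argument is valid for every initial condition, convergence is global.

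\emph{Main obstacle.} The one genuinely nontrivial step is the last: nonnegativity and integrability of $g$ alone give only $\liminf_{t\to\infty}g(t)=0$, hence a convergent \emph{subsequence} of $x(t)$. Promoting this to full trajectory convergence is precisely where the regularity of $\mathcal{V}$ (boundedness of $\dot x$, hence uniform continuity of $g$ and the applicability of Barbalat's lemma) is used, and where \emph{strictness} of contractivity—rather than plain contractivity—is needed to exclude limit points other than $x^*$.

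\begin{proof}

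\end{proof}
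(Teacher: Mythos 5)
Your proposal is correct, but it reaches the conclusion by a genuinely different route than the paper. The paper argues: finite regret $\Rightarrow$ passivity from $p$ to $x-\bar x$, hence a storage function $V\ge 0$ with $\dot V \le F(x)^\top(x-x^*)$; strict contractivity plus the Nash inequality then give $\dot V<0$ off $x^*$, and $V$ is used as a Lyapunov function. You instead work directly from the \emph{definition} of finite regret: evaluating $R_T(x^*)\le\beta$, decomposing the integrand $g(t)=F(x)^\top(x^*-x)$ into a Nash term and a contractivity term to get $g\ge 0$, concluding $\int_0^\infty g<\infty$, and then upgrading integrability to $g(t)\to 0$ via Barbalat plus compactness, with strictness of contractivity excluding limit points other than $x^*$. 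What your approach buys: it bypasses the step ``finite regret $\Rightarrow$ passive,'' which is the converse of Theorem~\ref{thrm:no_regret_theorem} and is asserted rather than proved in the paper, so your argument rests only on the stated hypothesis; it also makes explicit exactly where strictness is needed. What the paper's approach buys: a pointwise decrease certificate (a Lyapunov/storage function) with no need for the uniform-continuity/Barbalat machinery. The one caveat in your Step 3 is that uniform boundedness of $\dot x$ is immediate only for models in the form $\dot x=\mathcal V(x,p)$ on the compact simplex; for dynamics with unbounded internal state (FTRL, SHO-FTRL) you would need a short extra argument (e.g., Lipschitzness of the choice map $C$ together with boundedness of $F$ on $\Delta_n$) to get uniform continuity of $g$ — a regularity point the paper's own proof also glosses over, so it does not constitute a gap relative to the paper's standard of rigor.
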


\begin{proof}
If the learning dynamic model has finite regret, it is passive from \(p=F(x)\) to \(x-\bar{x}\) for every \(\bar{x}\in\Delta_n\). Consequently, there exists a storage function \(V(x)\ge 0\) with \(V(\bar{x})=0\) such that 
\(
\dot{V}(x) \le F(x)^\top (x-\bar{x}).
\)
In a strictly contractive game, the Nash equilibrium \(x^*\) is unique; choosing \(\bar{x}=x^*\) and using strict contractiveness, we get 
\[
\dot{V}(x) \le F(x)^\top (x-x^*) < F(x^*)^\top (x-x^*) \le 0,
\]
where the final inequality follows from the definition of Nash equilibrium. Since \(\dot{V}(x)<0\) for all \(x\ne x^*\) and \(V(x^*)=0\), \(V(x)\) is a Lyapunov function that guarantees global convergence to \(x^*\).
\end{proof}

By Definition~\ref{def: contracrive games}, contractive games are anti-incrementally passive. Hence, they are also anti-EI-passive and anti-\(\delta\)-passive (see Proposition~\ref{prop: incremental passivity to delta passivity input -out operator}). Consequently, when a contractive game is interconnected with a learning dynamic model that is incremental passive, \(\delta\)-passive, or EI-passive, the closed-loop system has stable behavior. The passivity-based classification of learning dynamic models (see Figure~\ref{fig: passivity based classification}) indicates that any model within this classification map is stable in contractive games and converges globally to the unique Nash equilibrium in strictly contractive games.

\section{CONCLUSIONS}
In this paper, we established  a connection between passivity, no-regret, and convergence in contractive games for various learning dynamic models. We showed that when a learning algorithm satisfies passivity  from the payoff vector to the deviation between its output strategy and any fixed strategy, it guarantees finite regret. This property was illustrated for the continuous-time versions of (FTRL) dynamics and (DP) dynamics as well as for their strategic higher-order variants, (SHO-FTRL) and (SHO-DP). Our numerical experiments further show that several evolutionary dynamics, including BNN, Smith, Logit, and TD dynamics, do not have finite regret.  We also examined the fragility of this property under delayed or perturbed payoffs. Moreover, we introduced a passivity-based classification of learning dynamics, based on incremental passivity, \(\delta\)-passivity, and EI-passivity, which provides a framework for analyzing convergence in contractive games.

In future work, we aim to study the finite regret properties of the payoff-based higher-order variants of learning dynamics that already have finite regret. Moreover, since many standard evolutionary dynamic models fail to guarantee finite regret, we plan to enhance these models by designing higher-order variants that ensure finite regret and preserve their desired convergence properties in contractive games.

\addtolength{\textheight}{-12cm}   

\bibliographystyle{ieeetr}

\bibliography{references.bib}

\end{document}